\documentclass[
preprint,
showpacs,
nofootinbib
]{revtex4-1}
\usepackage{amsthm}
\theoremstyle{plain}
\newtheorem{theorem}{Theorem}
\newtheorem*{theorem*}{Theorem}

\newtheorem*{definition*}{Definition}
\newtheorem{lemma}[theorem]{Lemma}
\newtheorem*{lemma*}{Lemma}
\newtheorem{corollary}[theorem]{Corollary}

\usepackage{braket}
\usepackage{delarray}
\usepackage{here}

\usepackage{amsmath}
\usepackage{txfonts}

\usepackage{bm}
\usepackage{color}
\usepackage{ulem}

\newcommand{\dv}[2]{\frac{\mathrm d #1}{\mathrm d #2}}
\newcommand{\Tr}{\operatorname{Tr}}

\newcommand{\ba}{\begin{array}}
\newcommand{\ea}{\end{array}}
\newcommand{\bmat}{\left(\begin{array}}
\newcommand{\emat}{\end{array}\right)}
\newcommand{\no}{\nonumber}

\newcommand{\be}{\begin{eqnarray}}
\newcommand{\ee}{\end{eqnarray}}

\begin{document}
\title{Existence of a Phase Transition in the One-Dimensional Ising Spin Glass Model with Long-Range Interactions on the Nishimori Line}
\author{Manaka Okuyama$^{1,2}$\footnote{corresponding author: manaka.okuyama.d2@tohoku.ac.jp}}
\author{Masayuki Ohzeki$^{1,3,4,5}$}
\affiliation{$^1$Graduate School of Information Sciences, Tohoku University, Sendai 980-8579, Japan}
\affiliation{$^2$School of Computing, Institute of Science Tokyo, Tokyo 152-8551, Japan}
\affiliation{$^3$Department of Physics, Institute of Science Tokyo, Tokyo 152-8551, Japan}
\affiliation{$^4$Research and Education Institute for Semiconductors and Informatics, Kumamoto University, Kumamoto 860-0862, Japan}
\affiliation{$^5$Sigma-i Co., Ltd., Tokyo 108-0075, Japan} 

\begin{abstract}  
Dyson [Commun. Math. Phys. \textbf{12}, 91 (1969)] rigorously proved the existence of a phase transition in the one-dimensional Ising model with long-range interactions of the form $r^{-\alpha}$ for $1 < \alpha < 2$.
In the present study, we extend this result to the Ising spin glass model with Gaussian disorder on the Nishimori line.
Following Dyson's method, we first prove the existence of long-range order at finite low temperatures in the Dyson hierarchical Ising spin glass model on the Nishimori line, with power-law-like interactions $J(r) \sim r^{-\alpha}$ for $1 < \alpha < 3/2$.
The key ingredients of the proof are the interpolation method developed in the rigorous analysis of mean-field spin glass models, the Gibbs--Bogoliubov inequality on the Nishimori line, and the Tsirelson--Ibragimov--Sudakov inequality (Gaussian concentration inequality).
We then use the Griffiths inequality on the Nishimori line to rigorously establish the existence of a phase transition in the one-dimensional Ising spin glass model with long-range interactions on the Nishimori line for $1 < \alpha < 3/2$. 
For $3/2 \le \alpha \le 2 $, the existence of a phase transition remains an open problem.

\end{abstract}
\date{\today}
\maketitle

\section{Introduction}

One of the most challenging problems in statistical physics is the rigorous analysis of phase transitions in spin glass models.
The rigorous analysis of mean-field spin glass models has made remarkable progress following the development of the interpolation method~\cite{GT}, and the replica symmetry breaking solution constructed by Parisi~\cite{Parisi} has been mathematically established~\cite{Guerra,Talagrand}.
In contrast, the rigorous analysis of finite-dimensional systems is generally extremely difficult.
One of the few notable exceptions is the so-called Nishimori line~\cite{Nishimori,Nishimori2,Nishimori3}, a special line in the phase diagram of the Ising spin glass model that connects the high-temperature regime with zero-mean random interactions to the low-temperature regime with a strong ferromagnetic bias.
On this line, gauge transformations yield various exact and rigorous results, including exact expressions for the internal energy, upper bounds on the specific heat, correlation identities, and analogues of the Griffiths inequalities~\cite{MNC,Kitatani}.
Nevertheless, even on the Nishimori line, rigorous results on phase transitions in finite-dimensional systems remain scarce~\cite{HM,GS}.
In the present study, we focus on the one-dimensional Ising spin glass model with long-range interactions on the Nishimori line and present a new rigorous result on the existence of a phase transition.

In contrast to one-dimensional spin systems with short-range interactions, one-dimensional spin systems with long-range interactions, in which the interaction strength decays as $J(r)\propto r^{-\alpha}$ with distance $r$, can exhibit a phase transition at finite temperature, depending on the value of $\alpha$.
This was first rigorously proved by Dyson for the one-dimensional Ising model with long-range interactions for $1<\alpha<2$~\cite{Dyson}.
Dyson first established the existence of a phase transition for the Ising model on the Dyson hierarchical lattice, where the interactions decay in a power-law-like form $J(r)\sim r^{-\alpha}$, for $1<\alpha<2$.
He then rigorously proved, using the Griffiths inequality~\cite{Griffiths,KS}, that a phase transition occurs in the one-dimensional Ising model with long-range interactions for the same range of $\alpha$.
For $\alpha=2$, Fr\"{o}hlich and Spencer proved the existence of a phase transition using a contour argument~\cite{FS}.
In addition, it has been shown that for $0\le\alpha<1$, with an appropriate normalization of the interaction, the free energy of the Ising model coincides with that of the corresponding mean-field model~\cite{Mori}.

The rigorous analysis of the one-dimensional Ising model with long-range interactions has also been extended to the case of a random magnetic field.
For $0\le\alpha\le1$, with an appropriate normalization of the interaction, Tsuda and Nishimori proved that the free energy coincides with that of the corresponding mean-field model~\cite{TN}.
Aizenman and Wehr proved that there is no long-range order at any temperature for $\alpha>3/2$~\cite{AW}.
The Imry--Ma argument~\cite{IM} suggests the existence of a phase transition for $1<\alpha<3/2$, and Cassandro, Orlandi, and Picco proved the existence of long-range order for $3-\log 3/\log 2<\alpha<3/2$ using contour methods~\cite{COP}.
The existence of a phase transition in the region $1<\alpha\le3-\log 3/\log 2$ had long remained an open problem, but has recently been resolved affirmatively by Ding, Huang, and Maia through a refinement of contour methods~\cite{DHM}.
In addition, the existence of long-range order in the random-field Ising model on the Dyson hierarchical lattice for $1<\alpha<3/2$ has recently been established by extending Dyson's method using concentration inequalities from probability theory~\cite{OO-Dyson}.

One-dimensional spin systems with long-range interactions have also been extensively studied in the context of the Ising spin glass model~\cite{KS2,EH,KAS,KY,Moore,MG,APT,LPTL}, in which the interactions $J_{ij}$ are Gaussian random variables with zero mean and variance $1/r^{\alpha}$.
One of the main motivations for these studies is that the effective spatial dimension can be tuned by varying $\alpha$, allowing one to investigate the validity of the mean-field picture in finite-dimensional systems.
For $\alpha>2$, it is known that there is no phase transition at any finite temperature~\cite{EH2,Enter}.
For $0\le\alpha<1$, with an appropriate normalization of the interaction, it has been rigorously proved that the free energy coincides with that of the corresponding mean-field model, namely the Sherrington--Kirkpatrick model~\cite{OO-SK}.
For $1<\alpha<2$, it is widely believed that a phase transition occurs, with mean-field-type behavior for $1<\alpha<4/3$ and short-range-type behavior for $4/3<\alpha<2$~\cite{KAS}.
However, to date there are no mathematically rigorous results establishing the existence of a phase transition in the region $1<\alpha<2$.
Compared with the ferromagnetic Ising model and the random-field Ising model, the rigorous analysis of the Ising spin glass model is considerably more difficult.

In the present study, we consider the one-dimensional Ising spin glass model with long-range interactions on the Nishimori line, in which the interactions $J_{ij}$ are Gaussian random variables with mean $\beta/r^{\alpha}$ and variance $1/r^{\alpha}$, where $\beta$ denotes the inverse temperature.
For $0\le\alpha<1$, with an appropriate normalization of the interaction, it has been proved that the free energy coincides with that of the Sherrington--Kirkpatrick model on the Nishimori line~\cite{OO-SK}.
On the other hand, to the best of our knowledge, the thermodynamic properties in the region $\alpha>1$ have not been investigated.
We extend Dyson's method for the random-field Ising model~\cite{OO-Dyson} to the Ising spin glass model by exploiting the special properties of the Nishimori line.
As a result, we rigorously prove the existence of ferromagnetic order in the one-dimensional Ising spin glass model with long-range interactions on the Nishimori line at finite low temperatures for $1<\alpha<3/2$.
We note that the behavior in the region $3/2 \le \alpha \le 2 $ remains an open problem.

The proof consists of the following three steps, following the strategy originally introduced by Dyson for the Ising model~\cite{Dyson}:
\begin{enumerate}
\item First, we prove that the Dyson hierarchical Ising spin glass model on the Nishimori line exhibits long-range order at finite low temperatures for $1<\alpha<3/2$ (Theorem~1).
\item Next, we show that the long-range order in the one-dimensional Ising spin glass model with long-range interactions on the Nishimori line is always greater than or equal to that of the corresponding Dyson hierarchical model (Theorem~2).
This result implies the existence of long-range order in the one-dimensional Ising spin glass model with long-range interactions on the Nishimori line at low temperatures for $1<\alpha<3/2$.
\item Finally, we prove that there is no long-range order at high temperatures in the one-dimensional Ising spin glass model with long-range interactions on the Nishimori line (Theorem~3).
\end{enumerate}

By exploiting the Griffiths inequality on the Nishimori line~\cite{MNC}, Theorems~\ref{theorem2} and~\ref{theorem3} can be proved essentially in the same manner as in the ferromagnetic Ising model~\cite{Dyson}.
In contrast, the proof of Theorem~\ref{theorem1} requires substantial technical modifications compared with the Ising model case, and methods developed in the rigorous analysis of mean-field spin glass models~\cite{AK} play a crucial role.

The remainder of this paper is organized as follows.
In Section~II, we define the one-dimensional Ising spin glass model with long-range interactions and the Dyson hierarchical Ising spin glass model on the Nishimori line, and state the main results.
Section~III is devoted to the proof of Theorem~\ref{theorem1}, which establishes the existence of long-range order in the Dyson hierarchical Ising spin glass model on the Nishimori line.
In Sections~IV and~V, we prove Theorems~\ref{theorem2} and~\ref{theorem3}, respectively, concerning the comparison with the one-dimensional Ising spin glass model with long-range interactions on the Nishimori line and the absence of long-range order at high temperatures.
Finally, in Section~VI, we conclude with a discussion of the results and related open problems.

\section{Model and results}
We consider the one-dimensional Ising spin glass model with long-range interactions decaying as $1/r^\alpha$ on the Nishimori line.
The Hamiltonian is defined by
\be
H_{\text{long}}&=& -\sum_{i<j}  J_{ij}\sigma_i \sigma_j , \label{long-H}
\ee
where $\sigma_i \in \{ -1,1\}$ for all $i\in \mathbb{Z}$, and free boundary conditions are assumed.
The interactions $J_{ij}$ are independent Gaussian random variables given by
\be
J_{ij}&\sim& \mathcal{N}\left(\frac{4\beta} {|i-j|^{\alpha}}, \frac{4}{|i-j|^{\alpha}}\right), \label{Jij-def}
\ee
where $1<\alpha$ and $\beta$ denotes the inverse temperature.
This choice of the Gaussian distribution satisfies the Nishimori line condition~\cite{Nishimori}.
The constant factor 4 in Eq. (\ref{Jij-def}) is not essential and is introduced solely to simplify the proof of Theorem~2.

Here, we briefly summarize some important properties of the Ising spin glass model on the Nishimori line.
On the Nishimori line, correlation functions for any pair of sites $(i,j)$ satisfy the following identity~\cite{Nishimori}:
\be
\mathbb{E}[\langle \sigma_i \sigma_j \rangle] &=& \mathbb{E}[\langle \sigma_i \sigma_j \rangle^2]\ge0 \label{Nishimori-identity},
\ee
where $\mathbb{E}[\cdots]$ denotes the expectation with respect to all quenched random variables, and $\langle\cdots\rangle$ denotes the thermal average for the Ising spin glass model on the Nishimori line.
Next, we consider two systems $C$ and $D$ defined on the same set of spins, with Hamiltonians
\be
H_C &=&- \sum_{\langle i,j\rangle} J_{C,ij}\sigma_i \sigma_j,
\\
H_D &=&- \sum_{\langle i,j\rangle} J_{D,ij}\sigma_i \sigma_j,
\ee
where the interactions in systems $C$ and $D$ are independent Gaussian random variables distributed as
\be
J_{C,ij}\sim \mathcal{N}(\beta c_{ij}, c_{ij}) ,
\\
J_{D,ij}\sim \mathcal{N}(\beta d_{ij}, d_{ij}).
\ee
Assume that for any pair of sites $(i,j)$,
\be
c_{ij} \ge d_{ij} \ge0,
\ee
Then, for any pair of sites $(k,l)$, the corresponding correlation functions satisfy the monotonicity property
\be
\mathbb{E}[\langle \sigma_k \sigma_l \rangle_C] &\ge& \mathbb{E}[\langle \sigma_k \sigma_l \rangle_D], \label{Griffiths-NL}
\ee
which is known as the Griffiths inequality on the Nishimori line~\cite{MNC,Kitatani}.

It is difficult to analyze the one-dimensional Ising spin glass model with long-range interactions on the Nishimori line directly.
Instead, we consider the Dyson hierarchical Ising spin glass model on the Nishimori line.
For each positive integer $N$, the system consists of $2^N$ Ising spins $\sigma_i = \pm1$, labeled by indices $i=1,2,\cdots,2^N$.
The Dyson hierarchical Ising spin glass model on the Nishimori line is defined recursively by the Hamiltonian
\be
H_{N}(\vec{\sigma})&=&H_{N-1}^1(\vec{\sigma}_1) + H_{N-1}^2(\vec{\sigma}_2) -\sum_{i,j=1}^{2^{N}} J_{ij}^{(N)}\sigma_i \sigma_ j  , \label{Dyson-H}
\\
H_0(\vec{\sigma})&=&0 ,
\ee
where $\vec{\sigma}_1\equiv \{ \sigma_i \}_{1\le i \le 2^{N-1}}$, $\vec{\sigma}_2\equiv \{ \sigma_i \}_{2^{N-1}+1\le i \le 2^{N}}$, and the coupling constants $J_{ij}^{(N)}$ are independent Gaussian random variables distributed as
\be
J_{ij}^{(N)} &\stackrel{\text{iid}}{\sim}& \mathcal{N}\left( \frac{\beta b_{N}}{2^{2 N} }, \frac{ b_{N}}{2^{2 N} } \right), \label{Jij-Dyson}
\ee
with 
\be
b_{N}=2^{(2-\alpha)N}.
\ee
This choice of the Gaussian distribution satisfies the Nishimori line condition~\cite{Nishimori}.
The thermal average with respect to the Hamiltonian $H_{N}(\vec{\sigma})$ is defined by
\be
\langle \cdots  \rangle_N &=&\frac{\Tr(\cdots e^{-\beta H_{N}(\vec{\sigma})})}{\Tr(e^{-\beta H_{N}(\vec{\sigma})})},
\ee
where $\Tr$ denotes the summation over all spin configurations.
For each $p=0,1,\cdots,N$ and $r=1,2,\cdots,2^{N-p}$, we define the block spin sum
\be
S_{p,r}&=& \sum_j \sigma_j   , \quad  (r-1)2^p +1 \le j \le r2^p  ,\label{Spr}
\\
S_{p,r}&=& S_{p-1,2r-1}+S_{p-1,2r}.
\ee
We define the long-range order parameter for the Dyson hierarchical Ising spin glass model on the Nishimori line by
\be
0\le f_N(p)&=&\frac{1}{2^{2p}} \mathbb{E}[ \langle S_{p,r}^2  \rangle_N] \le1 . \label{def-fN}
\ee
The Griffiths inequality on the Nishimori line~\eqref{Griffiths-NL} implies that, for fixed \( p \), the sequence \( f_N(p) \) is non-decreasing in \( N \): \
\be
f_N(p) \le f_{N+1}(p) . \label{N-mono}
\ee
On the other hand, the elementary inequality \( x^2 + y^2 \ge 2xy \) implies that, for fixed \( N \), \( f_N(p) \) is non-increasing in \( p \): \
\be
f_{N}(p-1) \ge f_N(p). \label{p-mono}
\ee
Equation~\eqref{N-mono} guarantees the existence of the limit
\be
f(p) \equiv \lim_{N\to\infty} f_N(p). \label{p-N-infy}
\ee
Furthermore, Eq.~\eqref{p-mono} implies
\be
1\ge f(p-1) \ge f(p)  \ge0\label{p-mono-infty}.
\ee
Therefore, the long-range order parameter in the thermodynamic limit is well defined by
\be
m^2 &\equiv& \lim_{p\to\infty}f(p). \label{m^2-def}
\ee
When \( m^2 > 0 \), the system is said to be in the ferromagnetic phase. This order parameter is a natural extension of the one introduced for the ferromagnetic Ising model on the Dyson hierarchical lattice~\cite{Dyson}.

Our first result concerns the existence of long-range order in the Dyson hierarchical Ising spin glass model on the Nishimori line.
\begin{theorem}[]\label{theorem1}
Let $1<\alpha <3/2$ and $\beta \ge 2^{(\alpha-2)/2}$. 
Then, the long-range order parameter of the Dyson hierarchical Ising spin glass model on the Nishimori line satisfies the following lower bound:
\be
m^2&\ge& \frac{1}{2}+\frac{1}{2} \mathbb{E}[  \langle  \sigma_1 \sigma_2  \rangle_1]  -\frac{2^{2\alpha-1}}{\beta^2(4-2^\alpha)} (1+\log \beta) -\frac{4^{-1 +\alpha } (2^\alpha (-4 + \alpha) - 8 (-3 + \alpha)) \log 2 }{\beta^2(4 - 2^\alpha)^2 }
\no\\
&&- \frac{  2^{2 \alpha-3/2 } (2^\alpha (-4 + \alpha) - 2^{5/2} (-3 + \alpha)) R^{1/2}  \log 2 }{\beta (2^{3/2}  - 2^\alpha)^2 }
-\frac{4^\alpha R^{1/2} \log(\beta+\beta\sqrt{2\pi e}  )}{ \beta (4-2^{\alpha+1/2})} ,
\ee
where $R=2/(2^\alpha-2)$.
In particular, the model exhibits long-range order at sufficiently low temperatures.
Moreover, in the zero-temperature limit,
\be
m^2&=&1.
\ee
\end{theorem}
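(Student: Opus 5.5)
We follow Dyson's recursive scheme. The quantity to control is the increment $f_N(p-1)-f_N(p)$, which we bound at each level by something summable in $p$ and of order $\beta^{-1}$ or $\beta^{-2}$. Then we telescope:
\be
m^2 \ge f(1) - \sum_{p\ge 2} \sup_N \bigl(f_N(p-1)-f_N(p)\bigr),
\no
\ee
with $f(1)\ge \frac12+\frac12\mathbb{E}[\langle\sigma_1\sigma_2\rangle_1]$ by \eqref{N-mono}. By the identity $S_{p,r}^2=S_{p-1,2r-1}^2+S_{p-1,2r}^2+2S_{p-1,2r-1}S_{p-1,2r}$, the increment equals
\be
\frac{1}{2^{2p-1}}\mathbb{E}\bigl[\langle (S_{p-1,2r-1}-S_{p-1,2r})^2\rangle_N\bigr]/2 ,
\no
\ee
that is, the expected squared block-spin difference. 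This difference is exactly the quantity sensitive to the level-$p$ coupling. By \eqref{N-mono} it suffices to work with $N=p$ and then let $N\to\infty$, since the increment bounds will be uniform in $N$.

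To bound the difference, we use a free-energy argument in place of Dyson's energy comparison. Write the level-$p$ interaction as $-\sum J^{(p)}_{ij}\sigma_i\sigma_j$. Its mean part is $-\beta b_p 2^{-2p}S_{p,1}^2$, and its fluctuating part is a Gaussian field $\sqrt{b_p}2^{-p}\sum g_{ij}\sigma_i\sigma_j$. On the Nishimori line the internal energy is exactly computable, so the derivative of the quenched free energy with respect to $b_p$ can be expressed through $\mathbb{E}\langle S_{p,1}^2\rangle$ (one uses $\mathbb{E}\langle\sigma_i\sigma_j\rangle=\mathbb{E}\langle\sigma_i\sigma_j\rangle^2$ from \eqref{Nishimori-identity}). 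We then compare the actual system at level $p$ with a trial system, in the spirit of Aizenman--Sims--Starr interpolation. The trial system has the two halves decoupled and aligned, which gives the Gibbs--Bogoliubov inequality on the Nishimori line. The resulting estimate bounds $\beta b_p 2^{-2p}\mathbb{E}\langle (S_{p-1,1}-S_{p-1,2})^2\rangle$ by the entropy-like cost $\log 2$ (one flip of a block) plus the fluctuation of the random interaction. This yields contributions like $\frac{\log 2}{\beta^2 b_p}$, whose sum over $p$ gives the geometric series $\sum p\,2^{-(2-\alpha)p}$-type terms appearing with $(4-2^\alpha)^{-2}$.

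The disorder fluctuations are handled by the Tsirelson--Ibragimov--Sudakov inequality. The quenched free energy of the level-$p$ block is a Lipschitz function of the Gaussians with constant bounded by the total variance. The sum of variances of all couplings inside a block of size $2^{p}$ is of order $\sum_{q\le p}2^{-\alpha q}2^{2p}2^{-2q}\cdot 2^{p-q}$, so the free-energy difference between the two block orientations fluctuates at scale $2^{p}\sqrt{R}\,$-type quantities, with $R=2/(2^\alpha-2)$ coming from $\sum_q 2^{(1-\alpha)q}$. The typical random gain from misaligning blocks is then compared with the ferromagnetic stiffness $\beta b_p$. The resulting ratio decays like $2^{(\alpha-3/2)p}$, which is summable exactly when $\alpha<3/2$; this produces the terms with $(2^{3/2}-2^\alpha)^{-2}$ and $(4-2^{\alpha+1/2})^{-1}$. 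The $\log(\beta+\beta\sqrt{2\pi e})$ and $(1+\log\beta)$ terms come from bounding $\mathbb{E}\log$ of Gaussian moments with $\mathbb{E}|g|\le\sqrt{2/\pi}$ and the lower threshold $\beta\ge 2^{(\alpha-2)/2}$.

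The main obstacle is this fluctuation step. The difference-of-free-energies argument must be made rigorous at the level of expectations, since the Nishimori identities hold only on average. We will try to first condition on the sub-block disorder, apply Gaussian concentration to the inter-block free-energy gap, and then integrate the tail. Finally, letting $\beta\to\infty$ kills all correction terms. The remaining bound also gives $m^2=1$: at zero temperature $\mathbb{E}\langle\sigma_1\sigma_2\rangle_1\to1$, since the single coupling has mean $\beta b_1/4\to\infty$ relative to its variance, and $m^2\le1$ trivially.
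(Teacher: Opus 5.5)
\textbf{Verdict: there is a genuine gap.} You have the right ingredients: Gibbs--Bogoliubov on the Nishimori line, interpolation, and Tsirelson--Ibragimov--Sudakov. But the central per-level estimate is asserted rather than constructed, and the mechanism you describe would not produce it.

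Interpolating between the coupled system and decoupled halves generates the error $\frac{\beta^2 b_N}{2^{2N}}\int_0^1\mathbb{E}\langle (S_{N-1,1}-S_{N-1,2})^2\rangle_t\,dt$. This is the very quantity you want to control, so ``aligned'' has to be made quantitative. The missing idea is to restrict both half-block magnetizations to a \emph{common} cell of a partition of $[-2^{N-1},2^{N-1}]$ into $r_N>(\beta^2 b_N)^{1/2}$ intervals. Then $(S_{N-1,1}-S_{N-1,2})^2<2^{2N}/(\beta^2 b_N)$, and the interpolation error is at most $1$. A sign/flip-type alignment leaves an error of order $\beta^2 b_N$, which after the prefactor $2/(\beta^2 b_N)$ is an $O(1)$ loss per level.

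The price of the restriction is extracted by Cauchy--Schwarz, $(\sum_k Z^{(1)}_k)^2\le\sum_k Z^{(1)}_kZ^{(2)}_k\cdot\sum_k Z^{(1)}_k/Z^{(2)}_k$. It is $\log r_N\le\log\beta+\log 2+\frac{(2-\alpha)N}{2}\log 2$, not a $\log 2$ flip cost. This is the source of the $(1+\log\beta)$ term and of the $N$-linear summands behind the squared denominators $(4-2^\alpha)^{-2}$ and $(2^{3/2}-2^\alpha)^{-2}$. The hypothesis $\beta\ge 2^{(\alpha-2)/2}$ serves only to give $r_N\le 2(\beta^2b_N)^{1/2}$. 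A cost $\log 2/(\beta^2 b_N)$ per level would sum to a $(4-2^\alpha)^{-1}$ term only, so your accounting is internally inconsistent.

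You also need the specific trial system: each half receives an extra level-$(N-1)$ coupling with mean $2\beta b_N/2^{2N}$ and variance $2b_N/2^{2N}$. This makes the mean part of the $t$-derivative collapse to $-(S_{N-1,1}-S_{N-1,2})^2$, and the overlap part has a favorable sign. By Gaussian reproduction, the unrestricted $t=0$ free energy is then $2P_{N-1}(2x_N+x_{N-1},x_{N-2},\dots,x_1)$. That is exactly where the \emph{lower} Gibbs--Bogoliubov bound is applied, paired with the upper one for $P_N$ about $x_N=0$. Together they give $f_N(N)\ge f_{N-1}(N-1)+\frac{2}{\beta^2 b_N}(P_N-2P_{N-1}(2x_N+x_{N-1},\dots))$.

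The fluctuation step must also be aimed at the right object. After Cauchy--Schwarz the remainder is $\mathbb{E}\log\sum_k Z^{(1)}_k/Z^{(2)}_k\le\log r_N+\mathbb{E}\max_k g_k$, where $g_k=\log Z^{(1)}_k-\log Z^{(2)}_k$ is built from the restricted partition functions of the two independent halves. Exchangeability of the halves gives $\mathbb{E}g_k=0$. Tsirelson--Ibragimov--Sudakov, plus an exponential-moment union bound over the $r_N$ cells with $\gamma=1/C_N$, then gives $C_N\log(r_N(1+\sqrt{2\pi e}))$. That is the source of $\sqrt{2\pi e}$, not $\mathbb{E}|g|$.

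Your variance count is also wrong. Level $q$ contributes $2^{N+q}$ couplings of variance $2^{-\alpha q}$, so $C_N^2=\beta^2 2^N R_N$ and the fluctuation scale is $2^{N/2}$, not $2^N$. A scale $2^N$ would give the non-summable ratio $2^{(\alpha-1)N}$; only $2^{N/2}$ is consistent with your $2^{(\alpha-3/2)N}$.

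Finally, monotonicity in $N$ does not reduce $\sup_N(f_N(p-1)-f_N(p))$ to the case $N=p$, since these within-system increments need not be monotone in $N$. The clean reduction is $m^2\ge\liminf_N f_N(N)$ together with a bound on the diagonal difference $f_{N-1}(N-1)-f_N(N)$. That diagonal difference is exactly what the two one-parameter Gibbs--Bogoliubov inequalities (convexity in a single $x$) deliver. They do not directly control the within-system increment $f_N(p-1)-f_N(p)$.
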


Next, we compare the Dyson hierarchical Ising spin glass model with the one-dimensional Ising spin glass model with long-range interactions.
Owing to the hierarchical structure, one can show that the interactions in the one-dimensional Ising spin glass model with long-range interactions on the Nishimori line are stronger than those in the Dyson hierarchical Ising spin glass model on the Nishimori line.
As a consequence, the Griffiths inequality on the Nishimori line~\eqref{Griffiths-NL} implies the following result.
\begin{theorem}[]\label{theorem2}
For any $\beta$ and $\alpha>1$, the long-range order parameter of the one-dimensional Ising spin glass model with long-range interactions on the Nishimori line is greater than or equal to that of the Dyson hierarchical Ising spin glass model on the Nishimori line.
Consequently, for $1<\alpha <3/2$, the one-dimensional Ising spin glass model with long-range interactions on the Nishimori line exhibits long-range order at finite low temperatures.
\end{theorem}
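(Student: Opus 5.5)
The plan is to realize both models on the same finite set of spins $\{1,\dots,2^N\}$ and compare them pair by pair, using the Griffiths inequality on the Nishimori line \eqref{Griffiths-NL}. For the long-range model I restrict to the finite chain $\{1,\dots,2^N\}$ with free boundary conditions and define $f^{\text{long}}_N(p)=2^{-2p}\mathbb{E}[\langle S_{p,r}^2\rangle_{\text{long},N}]$ exactly as in \eqref{def-fN}. Adding sites amounts to adding pairs with $d_{ij}=0\to c_{ij}>0$. Hence \eqref{Griffiths-NL} gives monotonicity in $N$ here as well, as in \eqref{N-mono}, and the analogues of $f(p)$ and $m^2$ in \eqref{p-N-infy}--\eqref{m^2-def} exist. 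It then suffices to show $f^{\text{long}}_N(p)\ge f_N(p)$ for every $N$ and $p$, and pass to the limits.

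\textbf{Step 1: effective pair couplings of the hierarchical model.} Unfold the recursion \eqref{Dyson-H}. The diagonal terms $i=j$ contribute only constants and drop out of the Gibbs measure. For $i<j$, the level-$n$ term contributes $(J^{(n)}_{ij}+J^{(n)}_{ji})\sigma_i\sigma_j$, where the two couplings belong to the same level-$n$ block. By \eqref{Jij-Dyson} this sum is Gaussian $\mathcal{N}(2\beta b_n 2^{-2n},\,2b_n2^{-2n})=\mathcal{N}(2\beta 2^{-\alpha n},\,2\cdot 2^{-\alpha n})$.

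Let $k$ be the smallest level at which $i$ and $j$ lie in a common block. The pair then receives independent contributions from the levels $n=k,\dots,N$. A sum of independent Gaussians of the form $\mathcal{N}(\beta v,v)$ is again of this form, so it stays on the Nishimori line. The effective coupling is therefore $\mathcal{N}(\beta d_{ij},d_{ij})$ with
\begin{equation*}
d_{ij}=\sum_{n=k}^{N}2\cdot 2^{-\alpha n}\le \frac{2\cdot 2^{-\alpha k}}{1-2^{-\alpha}} .
\end{equation*}

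\textbf{Step 2: pairwise comparison.} In the long-range model \eqref{Jij-def} the pair has $c_{ij}=4|i-j|^{-\alpha}$. Since $i,j$ lie in a common block of size $2^k$, we have $|i-j|\le 2^k-1<2^k$, so $c_{ij}> 4\cdot 2^{-\alpha k}$. It therefore suffices that $4\ge 2/(1-2^{-\alpha})$, i.e. $2^{-\alpha}\le 1/2$, which holds for all $\alpha\ge1$. This is precisely where the factor $4$ in \eqref{Jij-def} is used.

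Thus $c_{ij}\ge d_{ij}\ge0$ for all pairs. Applying \eqref{Griffiths-NL} to every pair $(k,l)$ inside the block $S_{p,r}$ and summing, $\mathbb{E}[\langle S_{p,r}^2\rangle]=\sum_{k,l}\mathbb{E}[\langle\sigma_k\sigma_l\rangle]$, gives $f^{\text{long}}_N(p)\ge f_N(p)$. Letting $N\to\infty$ and then $p\to\infty$ yields $m^2_{\text{long}}\ge m^2$. Combined with Theorem~\ref{theorem1}, this gives $m^2_{\text{long}}>0$ at sufficiently large $\beta$ for $1<\alpha<3/2$.

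\textbf{Main obstacle.} The difficulty is not conceptual but lies in the bookkeeping of Step 1. One must correctly account for the ordered-pair double counting in \eqref{Dyson-H}, the discarded diagonal terms, and the convention that $\beta$ multiplies $H$ in the Gibbs weight. These are needed to check that the effective hierarchical couplings really lie on the Nishimori line in the same normalization as \eqref{Griffiths-NL}, so that the comparison of variances alone is legitimate. A secondary point is that the order parameter of the long-range model must be defined via the same block sums and finite-volume limits, so that monotonicity in $N$ and in $p$ carry over verbatim.
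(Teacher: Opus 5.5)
Your proof is correct and follows the paper's argument. For each pair you locate the first common hierarchical level and use the Gaussian reproduction property to get an effective Nishimori-line coupling of variance at most $4\cdot 2^{-\alpha k}<4|i-j|^{-\alpha}$. You sum the geometric series directly, whereas the paper uses $2^{-\alpha q}\le 2^{-q}2^{(1-\alpha)p}$, and you then conclude with the Griffiths inequality on the Nishimori line. Your finite-chain definition of the long-range order parameter spells out a step the paper leaves implicit.

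One minor caveat: on the free-boundary chain $\{1,\dots,2^N\}$, $\mathbb{E}[\langle S_{p,r}^2\rangle_{\text{long},N}]$ depends on $r$, so monotonicity in $p$ does not carry over verbatim. Because your comparison holds block by block, the fix is easy: define the long-range quantity with a $\liminf$ in $p$, or work on boxes growing to $\mathbb{Z}$. Either way the conclusion is unchanged.
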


Finally, we prove the absence of long-range order at high temperatures.
By combining the method developed for the ferromagnetic Ising model~\cite{Dyson} with the approach used in Ref.~\cite{OO-NL-bound}, where an upper bound on the transition temperature on the Nishimori line was derived using the Griffiths inequality on the Nishimori line, we obtain a rigorous proof of the nonexistence of long-range order at high temperatures.
\begin{theorem}[]\label{theorem3}
If
\be
32\beta^2 \sum_{i=1}^{\infty}\frac{1}{ |i|^{\alpha}} <1,
\ee
then the long-range order parameter of the one-dimensional Ising spin glass model with long-range interactions on the Nishimori line vanishes, namely,
\be
m^2=0.
\ee
\end{theorem}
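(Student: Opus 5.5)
We need to propose a proof of Theorem 3, the high-temperature absence of order. Throughout we take $\mathbb{E}$ to mean $\mathbb{E}_\beta$, the expectation over couplings drawn on the Nishimori line at inverse temperature $\beta$.

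The plan is to show that $\mathbb{E}[\langle\sigma_i\sigma_j\rangle]$ decays fast enough that the block averages $2^{-2p}\sum_{i,j\in\text{block}}\mathbb{E}[\langle\sigma_i\sigma_j\rangle]$ tend to zero. This would be done by combining Dyson's argument with the Nishimori-line bound of Ref.~\cite{OO-NL-bound}. The order parameter for the one-dimensional model is defined exactly as in Eq.~\eqref{def-fN}, with $S_{p,r}$ the sum over $2^p$ consecutive sites, and
\[
f(p)=2^{-2p}\sum_{i,j}\lim_N \mathbb{E}[\langle\sigma_i\sigma_j\rangle].
\]
It therefore suffices to prove that $\sum_{j}\mathbb{E}[\langle\sigma_i\sigma_j\rangle]\le C<\infty$ uniformly in $i$ and in the system size. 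Then $f(p)\le C2^{-p}\to0$, which gives $m^2=0$.

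\textbf{Step 1: a first-order inequality.} The main ingredient is an inequality of Simon--Lieb type on the Nishimori line. Fix a site $i$ and decouple it: interpolate the couplings $J_{ik}$ at rate $t\in[0,1]$, with mean $t\beta\cdot 4/|i-k|^\alpha$ and variance $t\cdot4/|i-k|^\alpha$. Using Gaussian integration by parts together with the Nishimori identity Eq.~\eqref{Nishimori-identity}, I would compute
\[
\frac{d}{dt}\mathbb{E}_t[\langle\sigma_i\sigma_j\rangle].
\]
Each term is of the form $(\beta^2\cdot 4/|i-k|^\alpha)$ times a combination of correlations of at most four spins. On the Nishimori line every such combination can be bounded by a constant times $\mathbb{E}_t[\langle\sigma_k\sigma_j\rangle]$. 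This is the approach of Ref.~\cite{OO-NL-bound}, where the factor $32$ appears as the accumulated constant (e.g.\ $4\times$ the variance scale $\times$ combinatorial $2$'s). At $t=0$ the site $i$ is free, so $\mathbb{E}_0[\langle\sigma_i\sigma_j\rangle]=0$ for $j\ne i$. The Griffiths inequality Eq.~\eqref{Griffiths-NL} gives monotonicity in $t$, so $\mathbb{E}_t[\cdot]\le\mathbb{E}_1[\cdot]$. Integrating over $t$ then yields
\[
\mathbb{E}[\langle\sigma_i\sigma_j\rangle]\le 8\beta^2\sum_{k\ne i}\frac{4}{|i-k|^\alpha}\,\mathbb{E}[\langle\sigma_k\sigma_j\rangle],\qquad j\neq i .
\]

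\textbf{Step 2: summation.} Set $G(i,j)=\mathbb{E}[\langle\sigma_i\sigma_j\rangle]$ on a finite volume. Summing the inequality of Step 1 over $j$ gives
\[
\sum_j G(i,j)\le 1+\lambda\max_k\sum_j G(k,j),\qquad \lambda=32\beta^2\sum_{r\ge1}r^{-\alpha}\cdot(\text{factor }2\text{ absorbed}).
\]
Taking the maximum over $i$ yields $\max_i\sum_jG(i,j)\le 1/(1-\lambda)$ whenever $\lambda<1$. By the Griffiths inequality this bound is uniform in volume. It passes to the limit because $G$ is monotone in the volume, so $f(p)\le 2^{-p}/(1-\lambda)\to0$.

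\textbf{Main obstacle.} The main difficulty is Step 1: bounding the four-spin terms produced by integration by parts by two-point functions with the exact constant $32$ (equivalently $8\beta^2$ per unit coupling variance, summed over both directions $k<i$ and $k>i$). This needs careful use of the Nishimori gauge identities, for example $\mathbb{E}[\langle\sigma_i\sigma_j\rangle\langle\sigma_i\sigma_k\rangle\langle\sigma_k\sigma_j\rangle]$-type relations and $|\langle\cdot\rangle|\le1$. I would follow the derivation in Ref.~\cite{OO-NL-bound} verbatim and only check that the one-dimensional geometry contributes the factor $2\sum_{r\ge1}r^{-\alpha}$.
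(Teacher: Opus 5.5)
Your overall route is the paper's: decouple one site by interpolation, use Griffiths monotonicity on the Nishimori line, sum over sites, and close a linear inequality. But Step 1 has a genuine gap. The inequality you state, with only $\mathbb{E}[\langle\sigma_k\sigma_j\rangle]$ terms on the right, does not follow from the computation. Your claim that every term of the derivative is bounded by a multiple of $\mathbb{E}_t[\langle\sigma_k\sigma_j\rangle]$ is not justified. With site $i$ decoupled and $x_{ik}=4\beta^2/|i-k|^{\alpha}$, Gaussian integration by parts gives
\[
\frac{d}{dt}\mathbb{E}[\langle\sigma_i\sigma_j\rangle_t]=\sum_{k\neq i}x_{ik}\,\mathbb{E}\bigl[\langle\sigma_j\sigma_k\rangle_t^2+\langle\sigma_i\sigma_j\rangle_t^2\langle\sigma_i\sigma_k\rangle_t^2-2\langle\sigma_i\sigma_j\rangle_t\langle\sigma_i\sigma_k\rangle_t\langle\sigma_j\sigma_k\rangle_t\bigr].
\]
Bounding the second term by $\mathbb{E}[\langle\sigma_j\sigma_k\rangle]$ would need a Griffiths-II-type inequality, and that inequality is false pointwise. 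A frustrated triangle with $J_{ij}=J_{ik}=J>0$ and $e^{2\beta J_{jk}}\cosh 2\beta J=1$ has $\langle\sigma_j\sigma_k\rangle=0$ but $\langle\sigma_i\sigma_j\rangle=\langle\sigma_i\sigma_k\rangle=\tfrac12\tanh 2\beta J$. No averaged version is among the Nishimori-line tools available here.

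What those tools do give, and what the paper does with the roles of $i$ and $j$ exchanged, runs as follows:
\begin{enumerate}
\item Replace the cross term using the gauge identity $\mathbb{E}[\langle\sigma_i\sigma_j\rangle\langle\sigma_i\sigma_k\rangle\langle\sigma_j\sigma_k\rangle]=\mathbb{E}[\langle\sigma_i\sigma_j\rangle\langle\sigma_j\sigma_k\rangle]$.
\item Apply $-2xy\le x^2+y^2$ and $\langle\sigma_i\sigma_k\rangle^2\le1$.
\item Use $\mathbb{E}[\langle\cdot\rangle^2]=\mathbb{E}[\langle\cdot\rangle]$.
\item Use Griffiths monotonicity and integrate in $t$.
\end{enumerate}
The result is
\[
G(i,j)\le 2\sum_{k\neq i}x_{ik}\,G(k,j)+2\Bigl(\sum_{k\neq i}x_{ik}\Bigr)G(i,j).
\]
This contains a self-term proportional to the very correlation you are estimating.

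That self-term must be carried into Step 2 and absorbed there. Set $M_0=\sum_{r\ge1}r^{-\alpha}$ and $S_i=\sum_{j\neq i}G(i,j)$, and note $2\sum_{k\neq i}x_{ik}<16\beta^2M_0$. Summing over $j\neq i$ gives $S_i\le16\beta^2M_0(\max_kS_k+1)+16\beta^2M_0S_i$, hence $(1-32\beta^2M_0)\max_iS_i\le16\beta^2M_0$. This is exactly where the hypothesis comes from: $32=16+16$, half from the Simon--Lieb term and half from the self-term, not from ``combinatorial 2's''.

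Your own bookkeeping does not reproduce this constant. The coefficient $8\beta^2\cdot4/|i-k|^{\alpha}$ in your Step 1 sums to $64\beta^2M_0$. So even if Step 1 held, you would only obtain the result under $64\beta^2M_0<1$.

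Once Step 1 is corrected this way, the rest of your argument goes through: the bound is volume-independent, it passes to the monotone limit, and $f(p)\le2^{-p}\max_i(1+S_i)\to0$. Your final step via block sums is a cleaner way to conclude $m^2=0$ than the paper's.
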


From Theorems~2 and~3, we immediately obtain the following conclusion.
\begin{corollary}[]
For $1<\alpha <3/2$, the one-dimensional Ising spin glass model with long-range interactions on the Nishimori line undergoes a phase transition at a finite temperature.
\end{corollary}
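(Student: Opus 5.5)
We follow Dyson's strategy and derive a recursion for $f_N(p)$ in $p$. The recursion comes from a lower bound on the increment of block energies when two blocks of level $p-1$ are merged at level $p$. On the Nishimori line the internal energy is exactly computable by gauge symmetry. For the coupling term at level $p$ we get
\begin{align}
\mathbb{E}\Big[\sum_{i,j} J^{(p)}_{ij}\langle\sigma_i\sigma_j\rangle\Big]=\frac{\beta b_p}{2^{2p}}\,\mathbb{E}[\langle S_{p,r}^2\rangle]. \no
\end{align}
So $f_N(p)$ equals, up to normalization, the expected energy of the level-$p$ couplings. The aim is an inequality of the form
\begin{align}
f(p)\ge f(p-1)-\epsilon_p,\qquad \sum_{p\ge 2}\epsilon_p<\infty, \no
\end{align}
with $\epsilon_p$ small at large $\beta$. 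Telescoping from $p=1$ then gives
\begin{align}
m^2\ge f(1)-\sum_p\epsilon_p,\qquad f(1)=\tfrac12+\tfrac12\mathbb{E}[\langle\sigma_1\sigma_2\rangle_1]. \no
\end{align}
The $1/2$ appears because $S_{1,r}^2=2+2\sigma_1\sigma_2$. By the monotonicity \eqref{N-mono}, it suffices to prove the inequality for fixed $N$ and let $N\to\infty$.

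To obtain $\epsilon_p$, we interpolate in the level-$p$ coupling strength. We write $H_p(t)$, in which the top-level couplings have mean $t\beta b_p/2^{2p}$ and variance $t b_p/2^{2p}$, so that the model stays on a Nishimori-type line in the variable $t$. Gaussian integration by parts together with the Nishimori identity \eqref{Nishimori-identity} expresses $\frac{d}{dt}\mathbb{E}\log Z_p(t)$ through $\mathbb{E}\langle S_{p,r}^2\rangle_t$. Because $\mathbb{E}\langle S^2\rangle_t$ is monotone in $t$ by \eqref{Griffiths-NL}, integration brackets the free-energy increment between the values at $t=0$ and $t=1$.

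On the other side, the Gibbs--Bogoliubov inequality on the Nishimori line bounds the same increment by the energy evaluated in the decoupled product state of the two subblocks. That energy involves $\mathbb{E}[\langle S_{p-1,2r-1}\rangle\langle S_{p-1,2r}\rangle]$. We must relate this to $f(p-1)$, and here it is essential that the symmetry $\sigma\to-\sigma$ is not broken. We would handle this as in the random-field case of Ref.~\cite{OO-Dyson}: restrict to the "plus" or "minus" half of configuration space via the block magnetization sign, or compare $|S|$ instead of $S$. Entropy costs of order $\log\beta$ come from counting spin configurations and give the $(1+\log\beta)/\beta^2$ terms.

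The fluctuation terms proportional to $R^{1/2}/\beta$ arise as follows. The block free energy $\log Z_{p-1}$ is a Lipschitz function of the Gaussian couplings. Its Lipschitz constant squared is $\sum_{q\le p-1}2^{p-1-q}\,2^{2q}\,b_q\,2^{-2q}\cdot$const, which is geometric with ratio $2^{1-\alpha}<1$ and yields $R=2/(2^\alpha-2)$. The Tsirelson--Ibragimov--Sudakov inequality then bounds the expected deviation by $O(R^{1/2})$, uniformly in $p$. These fluctuations must be compared to the level-$p$ energy gain $\beta b_p f$, which grows like $2^{(2-\alpha)p}$. Summability of $2^{p/2}/2^{(2-\alpha)p}$ requires exactly $\alpha<3/2$, and this produces the denominators $(2^{3/2}-2^\alpha)$ and $(4-2^{\alpha+1/2})$. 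The $\log(\beta+\beta\sqrt{2\pi e})$ term comes from bounding $\mathbb{E}$ of the log of a Gaussian-smeared partition function.

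The main obstacle is controlling the cross term between the two subblock magnetizations in the presence of disorder, i.e.\ showing the subblocks align rather than cancel. I would first try to follow Ref.~\cite{OO-Dyson}: condition on the sign of each subblock's typical magnetization, use the gauge symmetry to align them at zero cost, and absorb the mismatch into the concentration error. The zero-temperature statement $m^2=1$ follows since every error term vanishes as $\beta\to\infty$, and $\mathbb{E}\langle\sigma_1\sigma_2\rangle_1\to1$ because its coupling has mean $\beta b_1/4$.
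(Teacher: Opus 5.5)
Your proposal does not prove the Corollary. At most it sketches Theorem~\ref{theorem1}, long-range order at large $\beta$ in the Dyson hierarchical model. The Corollary instead concerns $H_{\text{long}}$ and asserts a transition at a \emph{finite} temperature. The paper obtains it from Theorems~\ref{theorem2} and~\ref{theorem3}, and you supply neither ingredient.

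(i) Transfer to the one-dimensional model. If $i,j$ first share a block at level $p$, then $|i-j|<2^p$. By the Gaussian reproduction property, the total hierarchical coupling $\sum_{q=p}^N(J^{(q)}_{ij}+J^{(q)}_{ji})$ is distributed as $\mathcal{N}(\beta R_N(p),R_N(p))$, with $R_N(p)=2\sum_{q=p}^N 2^{-\alpha q}<4|i-j|^{-\alpha}$. Since both models lie on the Nishimori line, \eqref{Griffiths-NL} gives $\mathbb{E}[\langle\sigma_k\sigma_l\rangle_N]\le\mathbb{E}[\langle\sigma_k\sigma_l\rangle_{\text{long}}]$. This is what the factor $4$ in \eqref{Jij-def} is for.

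(ii) Absence of order at high temperature. Unless you show $m^2=0$ for some finite $\beta>0$, you have not excluded order at all temperatures, so no finite-temperature transition follows. The paper switches off the couplings at site $j$ by an interpolation in $t$. It then uses $\mathbb{E}[\langle\sigma_i\sigma_j\rangle_t\langle\sigma_i\sigma_k\rangle_t\langle\sigma_j\sigma_k\rangle_t]=\mathbb{E}[\langle\sigma_i\sigma_j\rangle_t\langle\sigma_i\sigma_k\rangle_t]$, together with \eqref{Nishimori-identity} and \eqref{Griffiths-NL}. This gives $(1-32\beta^2M_0)\max_j\sum_{i\neq j}\mathbb{E}[\langle\sigma_i\sigma_j\rangle_{\text{long}}]\le 16\beta^2M_0$, hence summable correlations and $m^2=0$ whenever $32\beta^2M_0<1$.

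Even as a proof of Theorem~\ref{theorem1}, your sketch stalls exactly at the step you call the main obstacle. First, your opening identity is false on the Nishimori line. Gaussian integration by parts plus \eqref{Nishimori-identity} gives $\mathbb{E}[J_{ij}\langle\sigma_i\sigma_j\rangle]=\beta\,\mathrm{Var}(J_{ij})$ exactly, so the level-$p$ energy is $\beta b_p$ regardless of $f_N(p)$. Only the derivative of the quenched pressure along the line, $\tfrac12(2^{2p}+\mathbb{E}[\langle S_{p,r}^2\rangle])$, sees the order parameter.

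More seriously, comparing with the decoupled product state produces the cross term $\mathbb{E}[\langle S_{p-1,2r-1}\rangle\langle S_{p-1,2r}\rangle]=0$. That yields only the trivial bound $f_N(N)\ge\tfrac12 f_{N-1}(N-1)$. ``Conditioning on the sign and gauge-aligning at zero cost'' is not an argument. Gauge transformations do not preserve the ferromagnetically biased coupling law. Also, in finite volume with free boundary conditions there is no sign to condition on unless you restrict the trace and pay an entropic cost.

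The paper's device works differently. It recurses in $N$ on $f_N(N)$ and, in Lemma~\ref{lemma5}, compares $P_N(x_N,\dots)$ with $2P_{N-1}(2x_N+x_{N-1},\dots)$. There the inter-block variance is moved into the intra-block couplings, so only $\mathbb{E}[\langle S_{N-1,1}^2\rangle]$ enters. Lemma~\ref{lemma6} then controls the pressure difference in three steps:
\begin{itemize}
\item It restricts $S_{N-1,1}$ and $S_{N-1,2}$ to a common one of $r_N\approx\beta b_N^{1/2}$ bins.
\item It interpolates between the two systems; because $(S_{N-1,1}-S_{N-1,2})^2<2^{2N}/(\beta^2 b_N)$ on the restricted set, this costs at most $1$.
\item It applies Cauchy--Schwarz, leaving the remainder $\mathbb{E}[\max_k g_k]$.
\end{itemize}
That remainder is bounded by Tsirelson--Ibragimov--Sudakov with Lipschitz constant $\beta 2^{N/2}R_N^{1/2}$, together with $\mathbb{E}[g_k]=0$, which holds because the two halves are identically distributed. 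This is the true source of the $\log r_N$ (hence $\log\beta$) terms and of the $2^{N/2}$ growth that forces $\alpha<3/2$.
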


\section{Proof of Theorem \ref{theorem1}}

\subsection{Sketch of proof}
Following Dyson's method~\cite{Dyson}, we rigorously prove that the Dyson hierarchical Ising spin glass model on the Nishimori line exhibits spontaneous magnetization at sufficiently low temperatures.
In particular, our argument closely follows the proof for the random-field Ising model on the Dyson hierarchical lattice developed in Ref.~\cite{OO-Dyson}.

First, Eqs.~\eqref{p-mono}, \eqref{p-N-infy}, and \eqref{m^2-def} imply that
\be
m^2 &\ge& \liminf_{N\to\infty} f_N(N)\label{m^r-f_N} ,
\ee
where $f_N(N)$ is obtained from $f_N(p)$ by setting $p = N$.
Therefore, it suffices to show that $f_N(N)$ admits a strictly positive lower bound.
The essence of Dyson's method is to exploit the hierarchical structure in order to derive a recurrence relation for the long-range order parameter.
More precisely, if one can establish a bound of the form
\be
f_N(N) &\ge& f_{N-1}(N-1) - \frac{1}{\beta} \mathcal{O}\left(2^{-(3/2-\alpha)N}\right),
\ee
then, for $1<\alpha<3/2$, it follows that
\be
f_N(N) &\ge& f_1(1) - \frac{1}{\beta}\mathcal{O}(1),
\ee
which completes the proof of the existence of long-range order at sufficiently low temperatures.

In the case of the ferromagnetic Ising model and the random-field Ising model, the Gibbs--Bogoliubov inequality~\cite{Kuzemsky} allows one to express the difference in the long-range order parameter in terms of a difference of free energies, and Dyson's method applies successfully.
In contrast, for the Ising spin glass model, Dyson's method cannot be applied directly, since the long-range order parameter does not naturally connect to the Gibbs--Bogoliubov inequality due to the presence of quenched randomness.
This difficulty is overcome by employing the Gibbs--Bogoliubov inequality on the Nishimori line~\cite{OO-GB}, which makes it possible to relate the difference in long-range order to a difference of free energies (Lemma~5).

The remaining task is to evaluate the resulting difference of free energies precisely (Lemma~6).
Although the techniques used for the ferromagnetic Ising model~\cite{Dyson} and the random-field Ising model~\cite{OO-Dyson} cannot be applied directly to the Ising spin glass model on the Nishimori line, essentially the same type of inequality can be obtained by means of the interpolation method~\cite{GT}.
In particular, our approach builds upon the rigorous analysis of mean-field spin glass models on the Nishimori line using the Franz--Parisi potential~\cite{AK}.
We note that, in the course of estimating the difference of free energies, it is necessary to invoke Gaussian concentration inequalities in order to control fluctuation terms arising from randomness (Lemma~7).
This requirement restricts our analysis to the range $1<\alpha<3/2$ and prevents us from treating the region $3/2\le\alpha\le2$ (see the discussion in Section~\ref{Discussions}).
With the above procedure, we obtain a recurrence relation for the long-range order parameter (Lemma~8), from which Theorem~1 follows immediately.

\subsection{Proof}
First, we introduce the quenched pressure function of the Dyson hierarchical Ising spin glass model on the Nishimori line, defined by
\be
P_N(x_N,x_{N-1},\cdots,x_1)&=& \mathbb{E}[\log\Tr  e^{-\beta H_{N}(\vec{\sigma})}],
\ee
where $x_N=\beta^2 b_N/2^{2N}$.
The starting point of our analysis is the following recurrence relation.
\begin{lemma}\label{lemma5}
The quantity $f_N(N)$ satisfies the inequality
\be
f_N(N)
&\ge&  f_{N-1}(N-1)+ \frac{2}{\beta^2 b_N} \left(P_N(x_N,x_{N-1},\cdots,x_1)-2P_{N-1}(2x_N+x_{N-1},x_{N-2},\cdots,x_1) \right) .
\ee
\end{lemma}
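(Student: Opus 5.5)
The plan is to interpolate between the full level-$N$ system and two decoupled level-$(N-1)$ blocks. The top-level couplings are redistributed along the path so that their total Gaussian ``weight'' stays constant. I will then use convexity of the quenched pressure along this path, which is the Gibbs--Bogoliubov inequality on the Nishimori line~\cite{OO-GB}, followed by the monotonicity~\eqref{N-mono}.

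\emph{Parametrization and path.} On the Nishimori line, each rescaled coupling $\beta J$ is distributed as $\mathcal{N}(x,x)$ with $x=\beta^2\times(\text{variance})$. The pressure is therefore a function of the parameters $x_k$ alone. I introduce, for $t\in[0,1]$, a Hamiltonian $H^{(t)}$ on $2^N$ spins with the following couplings:
\begin{itemize}
\item all levels $k\le N-2$ are as in $H_N$;
\item ordered pairs $(i,j)$ lying in different halves carry couplings of parameter $t\,x_N$;
\item ordered pairs inside a half carry the level-$(N-1)$ parameter $x_{N-1}$ together with a top-level parameter $(2-t)x_N$.
\end{itemize}
The endpoints are as required. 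At $t=1$ we recover $H_N$, so the pressure is $P_N(x_N,\dots,x_1)$. At $t=0$ the halves decouple and each has top parameter $x_{N-1}+2x_N$, so the pressure is $2P_{N-1}(2x_N+x_{N-1},x_{N-2},\dots,x_1)$. The total weight is conserved: cross pairs number $2\cdot4^{N-1}$ and in-block pairs number $2\cdot4^{N-1}$, and $2\cdot4^{N-1}\,t x_N+2\cdot4^{N-1}(2-t)x_N=4^N x_N$ for every $t$.

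\emph{First derivative.} Write $\varphi(t)=\mathbb{E}\log\Tr e^{-\beta H^{(t)}}$. Gaussian integration by parts for $\mathcal{N}(x,x)$ couplings gives
\[
\partial_x\,\mathbb{E}\log Z=\mathbb{E}\langle\sigma_i\sigma_j\rangle+\tfrac12\bigl(1-\mathbb{E}\langle\sigma_i\sigma_j\rangle^2\bigr)=\tfrac12\bigl(1+\mathbb{E}\langle\sigma_i\sigma_j\rangle\bigr)
\]
for each coupling, where the second equality uses the Nishimori identity~\eqref{Nishimori-identity}. This remains valid along the path, since every intermediate system is still on the Nishimori line. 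Summing over pairs, the constant terms cancel by weight conservation, and
\[
\varphi'(t)=\frac{x_N}{2}\Bigl(\mathbb{E}\langle S_{N,1}^2\rangle_t-2\,\mathbb{E}\langle S_{N-1,1}^2\rangle_t-2\,\mathbb{E}\langle S_{N-1,2}^2\rangle_t+\mathbb{E}\langle S_{N-1,1}^2\rangle_t+\mathbb{E}\langle S_{N-1,2}^2\rangle_t\Bigr).
\]
At $t=1$ the two halves are equivalent in law. Using $x_N4^N=\beta^2 b_N$, this becomes $\varphi'(1)=\frac{\beta^2 b_N}{2}\bigl(f_N(N)-f_N(N-1)\bigr)$.

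\emph{Convexity and conclusion.} The key input, and the main obstacle, is that $\varphi$ is convex in $t$. This is the content of the Gibbs--Bogoliubov inequality on the Nishimori line: along any linear path in the parameters $x$, the second derivative of the quenched pressure is nonnegative. I would invoke~\cite{OO-GB} for this. If a direct check were needed, I would differentiate once more by Gaussian integration by parts and use gauge invariance to write the second derivative as a nonnegative combination of truncated replica correlations. Granting convexity, $\varphi(0)\ge\varphi(1)-\varphi'(1)$, which gives
\[
P_N-2P_{N-1}(2x_N+x_{N-1},\dots)\le\frac{\beta^2 b_N}{2}\bigl(f_N(N)-f_N(N-1)\bigr).
\]
Finally, \eqref{N-mono} gives $f_N(N-1)\ge f_{N-1}(N-1)$. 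Substituting and rearranging yields the claimed inequality of Lemma~\ref{lemma5}.
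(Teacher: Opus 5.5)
Your argument is correct, but it is organized differently from the paper's.

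\textbf{How the paper argues.} The paper never connects the two pressures directly. It applies the Gibbs--Bogoliubov inequality on the Nishimori line twice, both times relative to the reference system $2P_{N-1}(x_{N-1},\dots,x_1)$ in which the level-$N$ couplings are switched off.
\begin{itemize}
\item It bounds $P_N$ from above by the tangent taken at $P_N$ itself. This produces $\tfrac12 x_N\bigl(2^{2N}+\mathbb{E}\langle S_{N,1}^2\rangle_N\bigr)$.
\item It bounds $2P_{N-1}(2x_N+x_{N-1},\dots)$ from below by the tangent taken at the reference point. This produces $2x_N\bigl(2^{2(N-1)}+\mathbb{E}\langle S_{N-1,1}^2\rangle_{N-1}\bigr)$, i.e.\ $f_{N-1}(N-1)$ directly.
\end{itemize}
The constants cancel on subtraction. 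Each step needs convexity in a single level parameter only, i.e.\ along a direction in which all changed $x$'s move the same way. That already follows from the Griffiths inequality on the Nishimori line, and \eqref{N-mono} is never needed.

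\textbf{What your route needs instead.} Your weight-conserving chord increases the cross-pair parameters while decreasing the in-block ones. So you need convexity along a mixed-sign direction, which is genuinely stronger. Entrywise nonnegativity of the Hessian is not enough. Your fallback description (``a nonnegative combination of truncated correlations'') does not deliver it either, since the coefficients $v_Av_B$ take both signs.

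The fact is nonetheless true. On the Nishimori line the gauge identities give
\[
\partial_{x_A}\partial_{x_B}P=\tfrac12\,\mathbb{E}\bigl[(\langle\sigma_A\sigma_B\rangle-\langle\sigma_A\rangle\langle\sigma_B\rangle)^2\bigr].
\]
This is the average of the Hadamard square of a covariance matrix, hence positive semidefinite by the Schur product theorem. Either check that the version of \cite{OO-GB} you invoke covers arbitrary directions, or add this line.

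\textbf{What each route buys.} Your route gives a slightly sharper intermediate bound, with $f_N(N-1)$ in place of $f_{N-1}(N-1)$; you then weaken it via \eqref{N-mono}. The paper's route needs only the weaker, Griffiths-level convexity.

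\textbf{A small slip.} The bracket in your $\varphi'(t)$ should be
\[
\mathbb{E}\langle S_{N,1}^2\rangle_t-2\,\mathbb{E}\langle S_{N-1,1}^2\rangle_t-2\,\mathbb{E}\langle S_{N-1,2}^2\rangle_t ,
\]
i.e.\ the cross-pair sum minus the in-block sum, without the two trailing $+$ terms. Your stated value of $\varphi'(1)$ is nevertheless the correct one.
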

This recurrence relation follows from the Gibbs--Bogoliubov inequality on the Nishimori line~\cite{OO-GB}; see Sec.~\ref{proof-lemma5} for the proof.

Lemma~\ref{lemma5} requires a precise evaluation of the difference between the two quenched pressure functions $P_N(x_N,x_{N-1},\cdots,x_1)$ and $P_{N-1}(2x_N+x_{N-1},x_{N-2},\cdots,x_1)$.
To this end, we introduce an interpolating Hamiltonian
\be
 H_{N,t}(\vec{\sigma})&=& H_{N-1}^1(\vec{\sigma}_1) + H_{N-1}^2(\vec{\sigma}_2) -\sum_{i,j=1}^{2^{N}} K_{ij}^{(N)}(t)\sigma_i \sigma_ j 
\no\\
&& -\sum_{i,j=1}^{2^{{N-1}}} K_{ij}^{(N-1)}(t)\sigma_i \sigma_ j -\sum_{i,j=2^{N-1}+1}^{2^{{N}}} K_{ij}^{(N-1)}(t)\sigma_i \sigma_ j  ,
\ee
where the coupling constants are independent Gaussian random variables distributed as
\be
K_{ij}^{(N)}(t) &\stackrel{\text{iid}}{\sim} & \mathcal{N}\left(t  \frac{\beta b_{N}}{2^{2 N}} ,t \frac{b_{N}}{2^{2 N}} \right) ,
\\
K_{ij}^{(N-1)}(t) &\stackrel{\text{iid}}{\sim} & \mathcal{N}\left((1-t) \beta \frac{2b_{N}}{2^{2 N}} ,(1-t) \frac{2b_{N}}{2^{2 N}} \right).
\ee
From Eq.~\eqref{Spr}, it follows that
\be
-2^{(N-1)}\le S_{N-1,1} \le 2^{(N-1)}  \label{interval}.
\ee
Let $r_N$ be a positive integer such that
\be
(\beta^2 b_N)^{1/2}< r_N \le 1+(\beta^2 b_N)^{1/2}, \label{lange-r_N}
\ee
and divide the interval~\eqref{interval} into $r_N$ equal subintervals $I_k, k=1,2,\cdots, r_N$~\textsuperscript{\ref{fn:rN-exception}}.
\footnotetext{
Strictly speaking, if $(\beta^2 b_N)^{1/2}>2^{N-1}$, we set $r_N=2^{N-1}+1$, the number of possible values of $S_{N-1,1}$. Then the
intervals may be chosen so that each contains at most one possible value of $S_{N-1,1}$, and hence the estimate below holds trivially. Since $r_N\le 1+(\beta^2 b_N)^{1/2}$ also holds in this case, the subsequent
bounds are unchanged.\label{fn:rN-exception}
}
For $S_{N-1,1}, S_{N-1,2}\in I_k $, we have
\be
(S_{N-1,1} -S_{N-1,2})^2 \le \frac{2^{2N}}{r_N^2}<\frac{2^{2N}}{\beta^2 b_N}. \label{res-S1-S2}
\ee
We define the interpolating restricted pressure function by
\be
Q_N(t)&=& \mathbb{E}[\log( \sum_{k=1}^{r_N} \Tr_{ \{ S_{N-1,1}\in I_k\}} \Tr_{ \{ S_{N-1,2}\in I_k\}}  \exp(-\beta H_{N,t}(\vec{\sigma}))   ) ], \label{def-Q}
\ee
where the sums over spin configurations are restricted to those for which $S_{N-1,1}$ and $S_{N-1,2}$ belong to the same interval $I_k$.
By definition, Eq.~\eqref{def-Q} immediately implies
\be
P_N(x_N,x_{N-1},\cdots,x_1)&\ge&Q_N(1) \label{P->Q}.
\ee
Moreover,
\be
Q_N(0)&=&\mathbb{E}[\log (\sum_{k=1}^{r_N} \Tr_{ \{ S_{N-1,1}\in I_k\}} \Tr_{ \{ S_{N-1,2}\in I_k\}}  \exp(-\beta H_{N,0}(\vec{\sigma}))  ) ]
\no\\
&=&\mathbb{E}[\log (\sum_{k=1}^{r_N} Z_{N,k}(\{J_1,K_1\}) Z_{N,k}(\{J_2,K_2\})) ] ,
\ee
where
\be
Z_{N,k}(\{J_1,K_1\})&\equiv&\Tr_{ \{ S_{N-1,1}\in I_k\}}  \exp(-\beta H_{N-1}^1(\vec{\sigma}_1)   + \beta \sum_{i,j=1}^{2^{{N-1}}} K_{ij}^{(N-1)}(0)\sigma_i \sigma_ j ),
\label{Z_{N,k}}
\\
Z_{N,k}(\{J_2,K_2\})&\equiv&\Tr_{ \{ S_{N-1,2}\in I_k\}}  \exp(-\beta H_{N-1}^2(\vec{\sigma}_2)   + \beta \sum_{i,j=2^{N-1}+1}^{2^{{N}}} K_{ij}^{(N-1)}(0)\sigma_i \sigma_ j  ).
\ee
The interpolating restricted pressure $Q_N(t)$ allows us to derive the following estimate (see Section~\ref{proof-lemma6} for the proof).
\begin{lemma}[] \label{lemma6}
For $\alpha >1$, the following inequality holds:
\be
&&P_N(x_N,x_{N-1},\cdots,x_1)-2P_{N-1}(2x_N+x_{N-1},x_{N-2},\cdots,x_1)
\no\\
&\ge& -1  -\mathbb{E}[\log \sum_{k=1}^{r_N} \frac{Z_{N,k}(\{J_1,K_1\})  } { Z_{N,k}(\{J_2,K_2\})  }]
\no\\
&\ge& -1  -\mathbb{E}[\max _k\log (r_N \frac{Z_{N,k}(\{J_1,K_1\})  } { Z_{N,k}(\{J_2,K_2\})  })] . \label{bottle}
\ee\
\end{lemma}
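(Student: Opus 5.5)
The plan is to run the interpolation $t\mapsto Q_N(t)$ between $t=0$ and $t=1$, bound its derivative from below, and then compare $Q_N(0)$ with $2P_{N-1}(2x_N+x_{N-1},\dots,x_1)$.

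\textbf{Step 1: the two endpoints.} By \eqref{P->Q}, $P_N\ge Q_N(1)$. At $t=0$, the coupling $K^{(N-1)}(0)$ has mean $\beta\cdot 2b_N/2^{2N}$ and variance $2b_N/2^{2N}$. Combined with the level-$(N-1)$ couplings inside $H^{a}_{N-1}$, whose parameter is $b_{N-1}/2^{2(N-1)}$, each block is a Dyson model of size $2^{N-1}$. Its top-level parameter is $x_{N-1}+2x_N$, by additivity of independent Gaussians. Hence the unrestricted block partition functions $\sum_k Z_{N,k}(\{J_a,K_a\})$ have quenched log-expectation $P_{N-1}(2x_N+x_{N-1},x_{N-2},\dots,x_1)$.

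\textbf{Step 2: the derivative of $Q_N(t)$.} I would differentiate using Gaussian integration by parts, exploiting that each coupling has mean $\beta v$ and variance $v$. For a coupling $J\sim\mathcal N(\beta t c,tc)$, write $J=\beta tc+\sqrt{tc}\,g$; then $\frac{d}{dt}\mathbb E\log Z=\frac{\beta^2c}{2}\mathbb E[\langle\sigma_i\sigma_j\rangle+1-\langle\sigma_i\sigma_j\rangle^2]$ in the restricted Gibbs measure. The Nishimori identity \eqref{Nishimori-identity} holds for this restricted measure, because the restriction is gauge-invariant: the set $\{S_{N-1,1},S_{N-1,2}\in I_k\}$ is not gauge-invariant in general, so this point must be checked. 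If it holds, the derivative becomes $\frac{\beta^2c}{2}(1+\mathbb E\langle\sigma_i\sigma_j\rangle)$ per bond. Summing with the signs (the $K^{(N)}$ terms enter with $+c$, and the $K^{(N-1)}$ terms with $-2c$ on the two diagonal blocks) gives
\begin{equation*}
Q_N'(t)=\frac{\beta^2b_N}{2\cdot 2^{2N}}\,\mathbb E\Big\langle (S_{N-1,1}+S_{N-1,2})^2-2S_{N-1,1}^2-2S_{N-1,2}^2\Big\rangle_t+(\text{constant terms}),
\end{equation*}
which equals $-\frac{\beta^2 b_N}{2\cdot2^{2N}}\mathbb E\langle (S_{N-1,1}-S_{N-1,2})^2\rangle_t$ up to the constant part. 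The constant part is the count of bonds, $2^{2N}c-2\cdot 2^{2N-2}\cdot 2c=0$, so it cancels. By the restriction \eqref{res-S1-S2}, $Q_N'(t)\ge -\frac{1}{2}$, hence $Q_N(1)\ge Q_N(0)-1$ (with slack).

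If gauge invariance fails, I would fall back on the Aizenman--Kirchner-style trick. Interpolate only the mean part via a convexity argument, and bound the variance part by $\frac{\beta^2 c}{2}\mathbb E\langle(\cdot)^2\rangle$ as well, absorbing it into the constant $-1$. This gap is the main obstacle: the restricted measure must retain enough of the Nishimori structure that the $-\langle\sigma_i\sigma_j\rangle^2$ terms combine into $(S_1-S_2)^2$ rather than leaving an uncontrolled overlap term.

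\textbf{Step 3: comparing $Q_N(0)$ with $2P_{N-1}$.} Write $A_k=Z_{N,k}(\{J_1,K_1\})$ and $B_k=Z_{N,k}(\{J_2,K_2\})$. The two blocks are independent and identically distributed, so
\begin{equation*}
2P_{N-1}=\mathbb E\log\Big(\sum_kA_k\Big)+\mathbb E\log\Big(\sum_kB_k\Big).
\end{equation*}
It therefore suffices to show $\log\sum_k A_kB_k\ge \log\sum_k B_k-\log\sum_k(A_k/B_k)^{-1}\cdots$. More cleanly, since $\sum_kA_k=\sum_k(A_k/B_k)B_k$, we have $\sum_k A_k\le \big(\sum_k A_kB_k\big)\cdot\sum_k (A_k/B_k)\cdot(\ldots)$; I would instead use Cauchy--Schwarz in the form $\big(\sum_k A_k\big)^2\le \big(\sum_k A_kB_k\big)\big(\sum_k A_k/B_k\big)$. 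Taking logarithms and expectations, and using the exchangeability of the two blocks to replace $\mathbb E\log\sum_k A_k$ by the average of the $A$ and $B$ terms, yields $Q_N(0)\ge 2P_{N-1}-\mathbb E\log\sum_k A_k/B_k$. This gives the first inequality of \eqref{bottle}. The second inequality is the trivial bound $\sum_{k=1}^{r_N}y_k\le r_N\max_k y_k$.
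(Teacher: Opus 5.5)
Your Steps 1 and 3 match the paper: the identification of the endpoints, then Cauchy--Schwarz $\bigl(\sum_k A_k\bigr)^2\le\bigl(\sum_k A_kB_k\bigr)\bigl(\sum_k A_k/B_k\bigr)$, then $\sum_k y_k\le r_N\max_k y_k$. The gap is in Step 2.

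Your main route needs the Nishimori identity $\mathbb{E}\langle\sigma_i\sigma_j\rangle'_t=\mathbb{E}[(\langle\sigma_i\sigma_j\rangle'_t)^2]$ for the restricted Gibbs measure, and that identity is not available. The constraint $\{S_{N-1,1},S_{N-1,2}\in I_k\}$ is not preserved by local gauge transformations: flipping all spins of the second half already sends $S_{N-1,2}\to-S_{N-1,2}$. So the restricted partition function is not gauge invariant, and the usual derivation of the identity fails. Your fallback is not a concrete argument. It is also motivated by the belief that, without this identity, an uncontrolled overlap term survives; that belief is the actual misconception. Separately, your per-bond formula is off. For $J=\beta tc+\sqrt{tc}\,g$, Gaussian integration by parts gives $\beta^2c\,\mathbb{E}\langle\sigma_i\sigma_j\rangle+\frac{\beta^2c}{2}\mathbb{E}[1-\langle\sigma_i\sigma_j\rangle^2]$, so the linear term carries $\beta^2c$, not $\beta^2c/2$.

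The missing idea is that no Nishimori structure is needed. Sum the correct per-bond derivative with weight $b_N/2^{2N}$ over all $(i,j)\in\{1,\dots,2^N\}^2$ and weight $-2b_N/2^{2N}$ over the two diagonal blocks.
\begin{itemize}
\item The constants cancel, as you computed.
\item The linear terms give $-\frac{\beta^2b_N}{2^{2N}}\mathbb{E}\langle(S_{N-1,1}-S_{N-1,2})^2\rangle'_t$.
\item The quadratic terms, rewritten with two replicas as $\sum_{i,j\in A}\langle\sigma_i\sigma_j\rangle^2=\langle q_A^2\rangle$, assemble by the same algebra into $+\frac{\beta^2b_N}{2\cdot 2^{2N}}\mathbb{E}\langle(q_{N-1,1}-q_{N-1,2})^2\rangle'_t\ge 0$.
\end{itemize}
This overlap term has a favorable sign and is simply dropped. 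That gives $Q_N'(t)\ge-\frac{\beta^2b_N}{2^{2N}}\mathbb{E}\langle(S_{N-1,1}-S_{N-1,2})^2\rangle'_t>-1$ by \eqref{res-S1-S2}, which is exactly the paper's proof.

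There is no slack here. The coefficient is $\beta^2b_N/2^{2N}$, not half of it, so the linear terms alone use the entire budget $-1$; the choice of $r_N$ in \eqref{lange-r_N} is calibrated for exactly this. The quadratic part therefore has to be handled by its sign; it could not be absorbed into the constant.
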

The last term in Eq.~\eqref{bottle} constitutes the main computational bottleneck.
To control this term, we invoke Gaussian concentration inequalities for Lipschitz functions, specifically the Tsirelson--Ibragimov--Sudakov inequality~\cite{BLM}, which yields the following bound (see Section~\ref{proof-lemma7} for the proof).
\begin{lemma}[]\label{lemma7}
For $\alpha >1$,
\be
\mathbb{E}[ \max_k \log  \frac{Z_{N,k}(\{J_1,K_1\})  } { Z_{N,k}(\{J_2,K_2\})  }]  
&\le&\beta 2^{N/2}R_N^{1/2}\log \left( r_N (1+\sqrt{2\pi e} ) \right),
\ee
where $R_N=(2(1-2^{(1-\alpha)N}))/(2^\alpha-2)$.
\end{lemma}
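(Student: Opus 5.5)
Write $X_k=\log Z_{N,k}(\{J_1,K_1\})-\log Z_{N,k}(\{J_2,K_2\})$ for $k=1,\dots,r_N$. Since $(J_1,K_1)$ and $(J_2,K_2)$ are independent and identically distributed, each $X_k$ is symmetric about zero, so $\mathbb{E}[X_k]=0$. The plan is to show that each $X_k$ is a Lipschitz function of standard Gaussians, obtain a sub-Gaussian tail for each $X_k$ from the Tsirelson--Ibragimov--Sudakov inequality, and then bound $\mathbb{E}[\max_k X_k]$ by a union bound over the $r_N$ blocks.

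\textbf{Step 1 (Lipschitz constant).} Write every coupling in $H^1_{N-1}$, $H^2_{N-1}$ and the $K^{(N-1)}(0)$ terms as mean $+\sqrt{\text{variance}}\,g$, with $g$ a standard Gaussian. For the restricted partition function, the derivative of $\log Z_{N,k}$ with respect to such a $g_{ij}$ is $\beta\sqrt{v_{ij}}\langle\sigma_i\sigma_j\rangle_{k}$, where $\langle\cdot\rangle_k$ is the restricted Gibbs average. Since $|\langle\sigma_i\sigma_j\rangle_k|\le1$, the squared gradient norm of $X_k$ is at most
\[
2\beta^2\sum_{\text{couplings}}v_{ij}.
\]
The variances on one half are as follows.
\begin{itemize}
\item The level-$N$ replacement couplings $K^{(N-1)}(0)$ contribute $2^{2(N-1)}\cdot 2b_N/2^{2N}$.
\item Each level $p\le N-1$ of $H^1_{N-1}$ contributes $2^{N-1-p}\cdot 2^{2p}\cdot b_p/2^{2p}=2^{N-1-p}2^{(2-\alpha)p}$.
\end{itemize}
Summing the geometric series $\sum_{p=1}^{N} 2^{N}2^{(1-\alpha)p}$ (absorbing the level-$N$ term, whose exponents match up to constants) gives a total of the order $2^{N}\cdot 2(1-2^{(1-\alpha)N})/(2^\alpha-2)=2^N R_N$. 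Here $\alpha>1$ is used for convergence. Hence $X_k$ is $L$-Lipschitz with $L\le\beta 2^{N/2}R_N^{1/2}$, up to the constant bookkeeping, which I expect the normalization of the variances to make come out exactly.

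\textbf{Step 2 (concentration and the maximum).} By the Tsirelson--Ibragimov--Sudakov inequality, $\mathbb{E}e^{\lambda X_k}\le e^{\lambda^2L^2/2}$, using $\mathbb{E}X_k=0$. One option is the standard bound $\mathbb{E}\max_k X_k\le \lambda^{-1}\log\sum_k\mathbb{E}e^{\lambda X_k}$. To get the stated form $L\log(r_N(1+\sqrt{2\pi e}))$, however, I would instead use the tail bound $P(X_k>u)\le e^{-u^2/(2L^2)}$ and integrate
\[
\mathbb{E}\max_k X_k\le u_0+\sum_k\int_{u_0}^\infty P(X_k>u)\,du.
\]
Choosing $u_0$ proportional to $L\log r_N$ and bounding the Gaussian tail integral by $r_N L\sqrt{2\pi}\,e^{\cdot}$-type constants should produce $L\log r_N+L\log(1+\sqrt{2\pi e})$ or something smaller. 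A cruder but sufficient version is to bound $\max_k X_k\le L\log\sum_k e^{|X_k|/L}$ and use $\mathbb{E}e^{|X_k|/L}\le 1+\sqrt{2\pi e}$, which follows from the sub-Gaussian tail (integrating $e^{s}$ against the tail $e^{-s^2/2}$ gives $1+\sqrt{2\pi}\,e^{1/2}$). Jensen's inequality then gives
\[
\mathbb{E}\max_k X_k\le L\log\bigl(r_N(1+\sqrt{2\pi e})\bigr),
\]
which exactly matches the claim. I would adopt this route.

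The main obstacle is Step 1: carefully verifying that the restricted sums still give a gradient bounded by $\beta\sqrt{v}$. This holds because the restriction depends only on spins, not on couplings. The other delicate part is counting the variances across the hierarchy levels to land exactly on $2^N R_N$.
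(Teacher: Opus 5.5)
Your proposal is correct and follows the paper's proof: the same Lipschitz constant $C_N=\beta 2^{N/2}R_N^{1/2}$, the Tsirelson--Ibragimov--Sudakov tail, the layer-cake bound $\mathbb{E}e^{\gamma(g_k-\mathbb{E}g_k)}\le 1+\sqrt{2\pi}\gamma C_N e^{\gamma^2C_N^2/2}$ at $\gamma=1/C_N$, Jensen over the $r_N$ blocks, and $\mathbb{E}g_k=0$ by symmetry of the two halves. Your constant bookkeeping closes exactly: each half contributes $2^{N-1}\sum_{p=1}^{N}2^{(1-\alpha)p}=2^{N-1}R_N$, not $2^N R_N$, because the level-$N$ term $b_N/2$ is precisely the $p=N$ summand.

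One slip: drop the absolute value in $e^{|X_k|/L}$. Integrating the one-sided tail only controls $\mathbb{E}e^{X_k/L}\le 1+\sqrt{2\pi e}$; the two-sided tail for $|X_k|$ doubles the $\sqrt{2\pi e}$ term. The bound $\max_k X_k\le L\log\sum_k e^{X_k/L}$ is all you need. Alternatively, the MGF form $\mathbb{E}e^{\lambda X_k}\le e^{\lambda^2L^2/2}$ that you quoted, taken at $\lambda=1/L$, already gives $L(\log r_N+\tfrac12)$, which is stronger than the claim.
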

By combining Lemmas~\ref{lemma5}, \ref{lemma6}, and~\ref{lemma7}, we obtain the desired recurrence relation.
\begin{lemma}\label{lemma8}
Let $N\ge2$. Then the following inequality holds:
\be
f_N(N)  &\ge& f_{N-1}(N-1) - \frac{2}{\beta^2 b_N} \left(1 +(1+\beta 2^{N/2}R_N^{1/2}) \log r_N + \beta 2^{N/2}R_N^{1/2}\log (1+\sqrt{2\pi e}  )  \right).
\ee
\end{lemma}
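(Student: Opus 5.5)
Lemma~\ref{lemma8} follows by chaining the three preceding lemmas; the only work is to split $\log(r_N Z_{N,k}(\{J_1,K_1\})/Z_{N,k}(\{J_2,K_2\}))$ correctly. We start from Lemma~\ref{lemma5}:
\be
f_N(N)\ge f_{N-1}(N-1)+\frac{2}{\beta^2 b_N}\left(P_N(x_N,\cdots,x_1)-2P_{N-1}(2x_N+x_{N-1},x_{N-2},\cdots,x_1)\right). \no
\ee
Since the prefactor $2/(\beta^2 b_N)$ is positive, any lower bound on the pressure difference gives a lower bound on $f_N(N)$. We insert the second inequality of Lemma~\ref{lemma6}:
\be
P_N-2P_{N-1}\ge -1-\mathbb{E}\Big[\max_k \log\Big(r_N \frac{Z_{N,k}(\{J_1,K_1\})}{Z_{N,k}(\{J_2,K_2\})}\Big)\Big]. \no
\ee

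Next we separate the factor $r_N$. For each realization, $\max_k \log(r_N a_k)=\log r_N+\max_k\log a_k$, so the expectation equals
\be
\log r_N+\mathbb{E}\Big[\max_k\log\frac{Z_{N,k}(\{J_1,K_1\})}{Z_{N,k}(\{J_2,K_2\})}\Big]. \no
\ee
Lemma~\ref{lemma7} bounds the remaining expectation by
\be
\beta 2^{N/2}R_N^{1/2}\log r_N+\beta 2^{N/2}R_N^{1/2}\log(1+\sqrt{2\pi e}), \no
\ee
using $\log(r_N(1+\sqrt{2\pi e}))=\log r_N+\log(1+\sqrt{2\pi e})$. Collecting the two $\log r_N$ contributions gives $(1+\beta 2^{N/2}R_N^{1/2})\log r_N$. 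Adding the constant $1$ then yields exactly the bracket in the statement, and multiplying by $-2/(\beta^2 b_N)$ gives the claimed recurrence.

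The only points to check are bookkeeping. First, Lemmas~\ref{lemma5}--\ref{lemma7} apply for the relevant $N$; the hypothesis $N\ge2$ ensures $H_{N-1}$ is a genuine hierarchical level, so Lemma~\ref{lemma5}'s Gibbs--Bogoliubov step and the interpolation in Lemma~\ref{lemma6} are well defined, and $\alpha>1$ gives $R_N>0$. Second, $r_N\ge1$, so $\log r_N\ge0$ and no sign issues arise when combining terms. The expectation of the maximum is finite because each $\log Z_{N,k}$ is integrable under Gaussian couplings. There is no real obstacle: the substantive estimates are already contained in Lemmas~\ref{lemma6} and~\ref{lemma7}, and the main care needed is tracking the two separate $\log r_N$ contributions.
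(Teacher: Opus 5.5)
Your proposal is correct and follows the paper's own argument: the paper proves Lemma~\ref{lemma8} simply by combining Lemmas~\ref{lemma5}, \ref{lemma6}, and~\ref{lemma7}. Your splitting $\max_k\log(r_N a_k)=\log r_N+\max_k\log a_k$, together with expanding $\log\bigl(r_N(1+\sqrt{2\pi e})\bigr)$, is exactly the bookkeeping that produces the coefficient $(1+\beta 2^{N/2}R_N^{1/2})$ of $\log r_N$.
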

We are now in a position to prove Theorem \ref{theorem1}.
\begin{proof}[Proof of Theorem 1]
Lemma \ref{lemma8} implies
\be
f_N(N)  &\ge& f_1(1)- \sum_{p=2}^N  \frac{2}{\beta^2 b_p} \left(1 +(1+\beta 2^{p/2}R_p^{1/2}) \log r_p + \beta 2^{p/2}R_p^{1/2}\log (1+\sqrt{2\pi e}  )  \right)
\no\\
&\ge& f_1(1)- \sum_{p=2}^\infty \frac{2}{\beta^2 b_p} \left(1 +(1+\beta 2^{p/2}R^{1/2}) \log r_p + \beta 2^{p/2}R^{1/2}\log (1+\sqrt{2\pi e}  )  \right) ,\label{f_N-inf-sum}
\ee
where 
\be
R = \lim_{p\to\infty}R_p =2/(2^\alpha-2).
\ee
For $\beta \ge 2^{(\alpha-2)/2}$, Eq.~\eqref{lange-r_N} yields 
\be
r_p \le 1+(\beta^2 b_p)^{1/2} \le 2(\beta^2 b_p)^{1/2}.
\ee
Substituting this bound into Eq.~\eqref{f_N-inf-sum}, we obtain the explicit lower bound
\be
f_N(N) &\ge& f_1(1)- \sum_{p=2}^\infty \frac{2}{\beta^2 b_p} \left(1 +(1+\beta 2^{p/2}R^{1/2}) \log (2(\beta^2 b_p)^{1/2}) + \beta 2^{p/2}R^{1/2}\log (1+\sqrt{2\pi e}  )  \right) 
\no\\
&=&f_1(1)- \sum_{p=2}^\infty \frac{2}{\beta^2 b_p} \left(1 +(1+\beta 2^{p/2}R^{1/2}) \left(\log (\beta ) +\log ( 2b_p^{1/2}) \right)+ \beta 2^{p/2}R^{1/2}\log (1+\sqrt{2\pi e}  )  \right) 
\no\\
&=&f_1(1)- \sum_{p=2}^\infty \frac{2}{\beta^2 b_p} \left(1+\log (\beta )  +(1+\beta 2^{p/2}R^{1/2}) \log ( 2b_p^{1/2}) + \beta 2^{p/2}R^{1/2}\log(\beta+\beta\sqrt{2\pi e}  )  \right) 
\no\\
&=&f_1(1)-\frac{2^{2\alpha-1}}{\beta^2(4-2^\alpha)} (1+\log \beta) -\frac{4^{-1 +\alpha } (2^\alpha (-4 + \alpha) - 8 (-3 + \alpha)) \log 2 }{\beta^2(4 - 2^\alpha)^2 }
\no\\
&&-2^{2 \alpha-3/2 }  \frac{(2^\alpha (-4 + \alpha) - 2^{5/2} (-3 + \alpha)) R^{1/2}  \log 2 }{\beta (2^{3/2}  - 2^\alpha)^2 }
-\frac{4^\alpha R^{1/2} \log(\beta+\beta\sqrt{2\pi e}  )}{ \beta (4-2^{\alpha+1/2})} .\label{f_N-bound}
\ee
Note that the infinite series appearing above converges to a finite value for $1<\alpha<3/2$.
Moreover, we have
\be
f_1(1)&=&\frac{1}{2^{2}} \mathbb{E}[ \langle (\sigma_1 +\sigma_2)^2  \rangle_1] 
\no\\
&=&\frac{1}{2}+\frac{1}{2} \mathbb{E}[  \langle  \sigma_1 \sigma_2  \rangle_1] . \label{f_1}
\ee
Combining Eqs.~\eqref{m^r-f_N}, \eqref{f_N-bound}, and~\eqref{f_1}, we arrive at the statement of Theorem~\ref{theorem1}.
\end{proof}

In the remainder of this section, we provide proofs of Lemmas~\ref{lemma5}, \ref{lemma6}, and~\ref{lemma7}.

\subsection{Proof of Lemma \ref{lemma5}}\label{proof-lemma5}

The Gibbs--Bogoliubov inequality on the Nishimori line~\cite{OO-GB} implies that
\be
P_N(x_N,x_{N-1},\cdots,x_1)&\le& P_N(0,x_{N-1},\cdots,x_1)  + \frac{1}{2} x_{N}   (2^{2N} +\mathbb{E}[\langle S_{N,1}^2 \rangle_N])
\no\\
&=& 2P_{N-1}(x_{N-1},\cdots,x_1)  +   \frac{1}{2}\frac{\beta^2 b_N}{2^{2N}} (2^{2N} +\mathbb{E}[\langle S_{N,1}^2 \rangle_N]), \label{GB-upper}
\ee
where we used the recursive structure of the Dyson hierarchical model.
On the other hand, applying the Gibbs--Bogoliubov inequality on the Nishimori line in the opposite direction yields
\be
2P_{N-1}(2x_N+x_{N-1},x_{N-2},\cdots,x_1) 
&\ge&2P_{N-1}(x_{N-1},x_{N-2},\cdots,x_1) + 2 \frac{2x_N}{2} (2^{2(N-1)} +\mathbb{E}[\langle S_{N-1,1}^2 \rangle_{N-1}]) 
\no\\
&\ge&2P_{N-1}(x_{N-1},x_{N-2},\cdots,x_1) + 2 \frac{\beta^2 b_N}{2^{2N}} (2^{2(N-1)} +\mathbb{E}[\langle S_{N-1,1}^2 \rangle_{N-1}]) .\label{GB-lower}
\no\\
\ee
Combining inequalities~\eqref{GB-upper} and~\eqref{GB-lower}, and using the definition of the long-range order parameter~\eqref{def-fN}, we obtain the recurrence relation stated in Lemma~\ref{lemma5}.

\subsection{Proof of Lemma \ref{lemma6}}\label{proof-lemma6}
Using the fundamental theorem of calculus together with Gaussian integration by parts, we obtain
\be
Q_N(1)&=&Q_N(0)+ \int_0^1dt \dv{}{t} Q_N(t)
\no\\
&=&Q_N(0)- \frac{\beta^2 b_N}{2^{2N}} \int_0^1dt \mathbb{E}[  \langle  (S_{N-1,1}-S_{N-1,2})^2 \rangle_t'  -  \frac{1}{2} \langle  (q_{N-1,1}-q_{N-1,2})^2 \rangle_t' ]
\no\\
&\ge&Q_N(0)- \frac{\beta^2 b_N}{2^{2N}} \int_0^1dt \mathbb{E}[  \langle  (S_{N-1,1}-S_{N-1,2})^2 \rangle_t ' ].
\ee 
Here  $\langle \cdots\rangle_t'$ denotes the thermal average with respect to the interpolating Hamiltonian $H_{N,t}(\vec{\sigma})$, with the restricted sum over spin configurations
\be
\sum_{k=1}^{r_N} \Tr_{ \{ S_{N-1,1}\in I_k\}} \Tr_{ \{ S_{N-1,2}\in I_k\}}.
\ee
Moreover, the overlap variables are defined by
\be
q_{N-1,1}&=&\sum_{i=1}^{2^{N-1}} \sigma_i^1\sigma_i^2,
\\
q_{N-1,2}&=&\sum_{i=2^{N-1}+1}^{2^{N}} \sigma_i^1\sigma_i^2 ,
\ee
where the variables $\sigma_i^1$ and $\sigma_i^2$ represent the spins at site $i$ in the first and second replicas, respectively.
Using the bound~\eqref{res-S1-S2}, we deduce
\be
Q_N(1)
&\ge&Q_N(0)- 1
\no\\
&=&\mathbb{E}[\log (\sum_{k=1}^{r_N} Z_{N,k}(\{J_1,K_1\}) Z_{N,k}(\{J_2,K_2\})) ] -1.
\ee
Combining this estimate with Eq.~\eqref{P->Q} and applying the Cauchy--Schwarz inequality, we obtain
\be
P_N(x_N,x_{N-1},\cdots,x_1) 
&\ge& Q_N(1)
\no\\
&\ge&\mathbb{E}[\log \left(\sum_{k=1}^{r_N} Z_{N,k}(\{J_1,K_1\}) \right)^2 ] -\mathbb{E}[\log \sum_{k=1}^{r_N} \frac{Z_{N,k}(\{J_1,K_1\})  } { Z_{N,k}(\{J_2,K_2\})  }] -1
\no\\
&=&2\mathbb{E}[\log \sum_{k=1}^{r_N} Z_{N,k}(\{J_1,K_1\})  ] -\mathbb{E}[\log \sum_{k=1}^{r_N} \frac{Z_{N,k}(\{J_1,K_1\})  } { Z_{N,k}(\{J_2,K_2\})  }] -1
\no\\
&=&2P_{N-1}(2x_N+x_{N-1},x_{N-2},\cdots,x_1) -\mathbb{E}[\log \sum_{k=1}^{r_N} \frac{Z_{N,k}(\{J_1,K_1\})  } { Z_{N,k}(\{J_2,K_2\})  }]  -1,
\ee
where, in the last equality, we used the definition of  \(Z_{N,k}(\{J_1,K_1\})\) in Eq.~\eqref{Z_{N,k}} together with the reproduction property of Gaussian distributions.
This completes the proof of Lemma~\ref{lemma6}.

\subsection{Proof of Lemma \ref{lemma7}}\label{proof-lemma7}
In this subsection only, we rescale the interactions by a change of variables so that they are expressed in terms of standard Gaussian random variables.
Specifically, we write
\be
J_{ij}^{(N-1)} &=& \sqrt{\frac{ b_{N-1}}{2^{2 (N-1)} }} L_{ij}^{(N-1)} +\frac{\beta b_{N-1}}{2^{2 (N-1)} },
\\
 L_{ij}^{(N-1)} &\stackrel{\text{iid}}{\sim}& \mathcal{N}\left( 0, 1\right).
\ee
We consider the random variable
\be
g_k \equiv\log  \frac{Z_{N,k}(\{J_1,K_1\}) }{Z_{N,k}(\{J_2,K_2\}) },
\ee
viewed as a function of all normalized Gaussian random variables $\{ L_{ij}^{(p)}\}$.
This function is Lipschitz continuous with Lipschitz constant
\be
C_N=(\beta^2 2^N R_N)^{1/2},
\ee
where
\be
R_N&=&\sum_{p=1}^N 2^{-p}b_p
=\frac{2(1-2^{(1-\alpha)N})}{2^\alpha-2} .
\ee
Indeed, for any $L_{ij}^{(p)}$, we have
\be
\left| \frac{\partial g_k}{\partial L_{ij}^{(p)} } \right| \le \beta \sqrt{\frac{ b_{p}}{2^{2 (p)} } }.
\ee
For the Gaussian variables associated with the additional couplings $K_{ij}^{(N-1)}(0)$, the corresponding contribution is included as the $p=N$ term in the definition of $C_N$.
Therefore, Gaussian concentration for Lipschitz functions, namely the Tsirelson--Ibragimov--Sudakov inequality~\cite{BLM}, implies that for any $t>0$
\be
\Pr( g_k- \mathbb{E}[g_k]\ge t) &\le& \exp( - \frac{t^2}{2C_N^2}).
\ee
For any $\gamma>0$, we then obtain~\cite{AK}
\be
\mathbb{E}[ e^{\gamma (g_k- \mathbb{E}[g_k])} ]
&=&\gamma \int_{-\infty}^0 dt e^{\gamma t}\Pr(g_k- \mathbb{E}[g_k]\ge t)  +\gamma \int_{0}^\infty dt e^{\gamma t}\Pr(g_k- \mathbb{E}[g_k]\ge t) 
\no\\
&\le&\gamma \int_{-\infty}^0 dt e^{\gamma t}  +\gamma \int_{0}^\infty dt e^{\gamma t} e^{ - \frac{t^2}{2C_N^2}}
\no\\
&=& 1+\sqrt{2\pi} \gamma C_N e^{\frac{\gamma^2C^2}{2}}. \label{exp-error}
\ee
As a consequence,
\be
\mathbb{E}[\max_k  g_k] -\max_k \mathbb{E}[  g_k]
&\le&\mathbb{E}[\max_k (g_k-\mathbb{E}[  g_k])] 
\no\\
&=& \frac{1}{\gamma}\log \exp(\gamma\mathbb{E}[\max_k  (g_k-\mathbb{E}[  g_k])] )
\no\\
&\le& \frac{1}{\gamma} \log \mathbb{E}[e^{  \gamma \max_k(g_k-\mathbb{E}[ g_k])} ] 
\no\\
&\le& \frac{1}{\gamma}\log \sum_k \mathbb{E}[e^{  \gamma(g_k-\mathbb{E}[  g_k])} ] 
\no\\
&\le& \frac{1}{\gamma}\log \left( r_N \left(1+\sqrt{2\pi} \gamma C_N e^{\frac{\gamma^2C_N^2}{2}} \right)\right) ,
\ee
where we used Jensen's inequality in the second inequality and Eq.~\eqref{exp-error} in the last inequality.
Choosing $\gamma=1/C_N=(\beta^2 2^N R_N)^{-1/2}$, we finally obtain
\be
&&\mathbb{E}[ \max_k \log  \frac{Z_{N,k}(\{J_1,K_1\})  } { Z_{N,k}(\{J_2,K_2\})  }]
\no\\
&\le&   \max_k\mathbb{E}[  \log  \frac{Z_{N,k}(\{J_1,K_1\})  } { Z_{N,k}(\{J_2,K_2\})  }]  +\beta 2^{N/2}R_N^{1/2}\log \left(r_N (1+\sqrt{2\pi e}  ) \right)
\no\\
&=&   \max_k \left\{\mathbb{E}[  \log  Z_{N,k}(\{J_1,K_1\})  - \log Z_{N,k}(\{J_2,K_2\}) ] \right\}  +\beta 2^{N/2}R_N^{1/2}\log\left( r_N (1+\sqrt{2\pi e}  )\right) 
\no\\
&=&  0+ \beta 2^{N/2}R_N^{1/2}\log \left(r_N (1+\sqrt{2\pi e}  ) \right),
\ee
which proves Lemma~\ref{lemma7}.

\section{Proof of Theorem \ref{theorem2}}
Owing to the hierarchical structure of the Dyson hierarchical lattice, for each pair of sites $(i,j)$ there exists a unique integer $p$  such that the spins $\sigma_i$ and $\sigma_j$ belong to a common block $S_{p,r}$ but not to any common block $S_{p-1,r}$.
In other words, the pair of sites $(i,j)$ first interacts at the $p$-th hierarchical level.
As a consequence, their distance is bounded by
\be
|i-j| <2^{p}. \label{distance-p}
\ee
From the definition of the Dyson hierarchical Hamiltonian~\eqref{Dyson-H}, the total interaction between the pair $(i,j)$ is given by
\be
\sum_{q=p}^N   ( J_{ij}^{(q)} +J_{ji}^{(q)}).
\ee
By the reproduction property of Gaussian distributions together with Eq.~\eqref{Jij-Dyson}, we have the equality in distribution
\be
\sum_{q=p}^N ( J_{ij}^{(q)} +J_{ji}^{(q)}) \label{}
&\stackrel{d}{=}& J_{ij}'  ,
\ee
where $J_{ij}'$ is a Gaussian random variable distributed as
\be
J_{ij}' &\sim& \mathcal{N}\left(\beta R_N(p),R_N(p) \right), \label{J'ij}
\ee
with
\be
R_N(p)&=& 2\sum_{q=p}^N \frac{b_q}{2^{2q}}
\no\\
&\le&2\sum_{q=p}^N\frac{1}{2^{q}} 2^{(1-\alpha)p}
\no\\
&\le& 2^{2-\alpha p}
\no\\
&<&\frac{4}{|i-j|^{\alpha}} . \label{J'ij-bound}
\ee
In the last inequality, we used Eq.~\eqref{distance-p}.
Equations~\eqref{J'ij} and~\eqref{J'ij-bound} show that, for any pair of sites $(i,j)$, the effective interaction strength in the Dyson hierarchical Ising spin glass model on the Nishimori line is smaller than that in the one-dimensional Ising spin glass model with long-range interactions on the Nishimori line defined by Eq.~\eqref{Jij-def}, provided that the two systems have the same size.
Therefore, the Griffiths inequality on the Nishimori line~\eqref{Griffiths-NL} implies that, for any pair of sites $(k,l)$,
\be
\mathbb{E}[\langle \sigma_k \sigma_l  \rangle_N] &\le& \mathbb{E}[\langle \sigma_k \sigma_l \rangle_{\text{long}}],
\ee
where $\langle \cdots\rangle_{\text{long}}$ denotes the thermal average with respect to the one-dimensional Ising spin glass model with long-range interactions on the Nishimori line defined by Eq.~\eqref{long-H}.
This completes the proof of Theorem~\ref{theorem2}.

\section{Proof of Theorem \ref{theorem3}}

For any fixed pair of sites $(i,j)$, we introduce an interpolating parameter $0\le t\le1$ into the Gaussian distribution of all interactions connected to the site $j$ by replacing
\be
J_{jk} \to J_{jk}(t) \sim \mathcal{N}\left(t  \frac{4\beta}{ |j-k|^{\alpha}}, t \frac{4}{|j-k|^{\alpha}} \right) .
\ee
We denote the corresponding $t$-dependent correlation function by $\mathbb{E}[\langle \sigma_i\sigma_j \rangle_t]$.
Note that, at $t=1$, the correlation function coincides with that of the original model,
\be
\mathbb{E}[\langle \sigma_i\sigma_j \rangle_1]&=&\mathbb{E}[\langle \sigma_i\sigma_j \rangle_{\text{long}}],\label{correlation-t=1}
\ee
while at $t=0$ the interaction between site $j$ and all other sites vanishes, and hence
\be
\mathbb{E}[\langle \sigma_i\sigma_j \rangle_0]&=&\mathbb{E}[\langle \sigma_i \rangle_0 \langle\sigma_j \rangle_0]
=\mathbb{E}[\langle \sigma_i \rangle_0 \cdot 0]=0.
\ee
Following Ref.~\cite{OO-NL-bound}, we compute the derivative with respect to $t$
\be
\dv{}{t}\mathbb{E}[\langle \sigma_i \sigma_j \rangle_t]
&=&\sum_{k(\neq j)}x_{jk} \mathbb{E}[\langle \sigma_i \sigma_k \rangle_t^2 + \langle \sigma_i \sigma_j \rangle_t^2 \langle \sigma_j \sigma_k \rangle_t^2-2\langle \sigma_i \sigma_j \rangle_t \langle \sigma_i \sigma_k \rangle_t \langle \sigma_j \sigma_k \rangle_t ]
\no\\
&=&\sum_{k(\neq j)}x_{jk} \mathbb{E}[\langle \sigma_i \sigma_k \rangle_t^2 + \langle \sigma_i \sigma_j \rangle_t^2 \langle \sigma_j \sigma_k \rangle_t^2-2\langle \sigma_i \sigma_j \rangle_t \langle \sigma_i \sigma_k \rangle_t  ]
\no\\
&\le&\sum_{k(\neq j)}x_{jk} \mathbb{E}[\langle \sigma_i \sigma_k \rangle_t^2 + \langle \sigma_i \sigma_j \rangle_t^2 \langle \sigma_j \sigma_k \rangle_t^2+ \langle \sigma_i \sigma_j \rangle_t^2 +  \langle \sigma_i \sigma_k \rangle_t^2  ] ,
\ee
where $x_{jk}=4\beta^2 /|j-k|^{\alpha}$.
In the second equality, we used the identity on the Nishimori line~\cite{Nishimori2}
\be
\mathbb{E}[\langle \sigma_i \sigma_j \rangle_t \langle \sigma_i \sigma_k \rangle_t \langle \sigma_j \sigma_k \rangle_t]= \mathbb{E}[\langle \sigma_i \sigma_j \rangle_t \langle \sigma_i \sigma_k \rangle_t],
\ee
and in the inequality we applied  $-2xy\le x^2+y^2$.
Using $\langle \sigma_j \sigma_k \rangle_t^2\le1$ and the Nishimori identity~\eqref{Nishimori-identity}, we further obtain
\be
\dv{}{t}\mathbb{E}[\langle \sigma_i \sigma_j \rangle_t]
&\le&\sum_{k(\neq j)}x_{jk} \mathbb{E}[\langle \sigma_i \sigma_k \rangle_t^2 + \langle \sigma_i \sigma_j \rangle_t^2 + \langle \sigma_i \sigma_j \rangle_t^2 +  \langle \sigma_i \sigma_k \rangle_t^2  ]
\no\\
&=&2\sum_{k(\neq j)}x_{jk} \mathbb{E}[ \langle \sigma_i \sigma_k \rangle_t + \langle \sigma_i \sigma_j \rangle_t ]
\no\\
&\le&2\sum_{k(\neq j)}x_{jk} \mathbb{E}[ \langle \sigma_i \sigma_k \rangle_{\text{long}} + \langle \sigma_i \sigma_j \rangle_{\text{long}} ],
\ee
where we used Eq. \eqref{correlation-t=1} and the Griffiths inequality on the Nishimori line~\eqref{Griffiths-NL} in the last inequality.
By integrating with respect to $t$ from $0$ to $1$, and using $\mathbb{E}[\langle \sigma_i \sigma_j \rangle_{1}]=\mathbb{E}[\langle \sigma_i \sigma_j \rangle_{\text{long}}]$,
we obtain
\be
\mathbb{E}[\langle \sigma_i \sigma_j \rangle_{\text{long}}] 
&\le& \mathbb{E}[\langle \sigma_i \sigma_j \rangle_0] +2\sum_{k(\neq j)}x_{jk} \mathbb{E}[ \langle \sigma_i \sigma_k \rangle_{\text{long}} + \langle \sigma_i \sigma_j \rangle_{\text{long}} ]
\no\\
&=& 2\sum_{k(\neq j) }x_{jk} \mathbb{E}[ \langle \sigma_i \sigma_k \rangle_{\text{long}}  ] +2\sum_{k(\neq j)  }x_{jk} \mathbb{E}[  \langle \sigma_i \sigma_j \rangle_{\text{long}} ].
\ee
Summing over $i(\neq j)$ we obtain
\be
\sum_{i(\neq j)}\mathbb{E}[\langle \sigma_i \sigma_j \rangle_{\text{long}}] 
&\le&  2  \sum_{k(\neq j) }x_{jk} \sum_{i(\neq j)} \mathbb{E}[ \langle \sigma_i \sigma_k \rangle_{\text{long}}  ] +2 \sum_{k(\neq j)  }x_{jk} \sum_{i(\neq j)}\mathbb{E}[  \langle \sigma_i \sigma_j \rangle_{\text{long}} ]
\no\\
&\le&  2  \sum_{k(\neq j) }x_{jk} \sum_{i} \mathbb{E}[ \langle \sigma_i \sigma_k \rangle_{\text{long}}  ] +2 \sum_{k(\neq j)  }x_{jk} \sum_{i(\neq j)}\mathbb{E}[  \langle \sigma_i \sigma_j \rangle_{\text{long}} ]
\no\\
&=&  2  \sum_{k(\neq j) }x_{jk}\left( \sum_{i(\neq k)} \mathbb{E}[ \langle \sigma_i \sigma_k \rangle_{\text{long}}  ] +1 \right) +2 \sum_{k(\neq j)  }x_{jk} \sum_{i(\neq j)}\mathbb{E}[  \langle \sigma_i \sigma_j \rangle_{\text{long}} ]
\no\\
&\le& 2  \sum_{k(\neq j) }x_{jk} \left(\max_j \sum_{i(\neq j)} \mathbb{E}[ \langle \sigma_i \sigma_j \rangle_{\text{long}}  ] +1 \right)+2 \sum_{k(\neq j)  }x_{jk} \sum_{i(\neq j)}\mathbb{E}[  \langle \sigma_i \sigma_j \rangle_{\text{long}} ] .
\ee
Recalling that $x_{jk}=4\beta^2 /|j-k|^{\alpha}$, we have
\be
\sum_{k(\neq j) }x_{jk} <  8\beta^2 M_0,
\ee
where 
\be
M_0&\equiv&\sum_{i=1}^{\infty}\frac{1}{ |i|^{\alpha}} <  \infty.
\ee
Therefore,
\be
\sum_{i(\neq j)}\mathbb{E}[\langle \sigma_i \sigma_j \rangle_{\text{long}}] 
&\le& 16  \beta^2M_0 \left(\max_j \sum_{i(\neq j)} \mathbb{E}[ \langle \sigma_i \sigma_j \rangle_{\text{long}}  ] +1 \right)+16\beta^2M_0 \sum_{i(\neq j)}\mathbb{E}[  \langle \sigma_i \sigma_j \rangle_{\text{long}} ].
\ee
Taking the maximum over $j$ on both sides, we arrive at
\be
(1-32\beta^2M_0) \max_j \sum_{i(\neq j)}\mathbb{E}[\langle \sigma_i \sigma_j \rangle_{\text{long}}] 
&\le& 16\beta^2M_0 .
\ee
Hence, if
\be
1 > 32\beta^2 M_0,
\ee
the infinite sum
\be
\sum_{i (\ne j)} \mathbb{E}\bigl[\langle \sigma_i \sigma_j \rangle_{\text{long}} \bigr]
\ee
converges absolutely for any fixed $j$. Therefore,
\be
\mathbb{E}\bigl[\langle \sigma_i \sigma_j \rangle_{\text{long}} \bigr] \to 0
\ee
as $|i-j| \to \infty$.
Consequently, there is no long-range order for $1 > 32\beta^2 M_0$, and Theorem~\ref{theorem3} follows.

\section{Discussions}\label{Discussions}
Before the present work, to the best of our knowledge, there had been no rigorous results establishing the existence of a phase transition for one-dimensional Ising spin glass models with long-range interactions in the region $1<\alpha<2$.
In this work, we have rigorously proved the existence of a phase transition for  $1<\alpha<3/2$  in the one-dimensional Ising spin glass model with long-range interactions on the Nishimori line.
Although ferromagnetic order at low temperatures on the Nishimori line is expected due to the strong ferromagnetic bias, establishing this rigorously remains highly nontrivial.

The reason why our proof does not extend beyond $\alpha=3/2$ is as follows.
In the proof of Lemma~\ref{lemma7}, the use of a probability concentration inequality produces an error term of order $\mathcal{O}(2^{N/2})$.
For $3/2\le \alpha$, this error term prevents the infinite sum appearing in Eq.~\eqref{f_N-inf-sum} from converging to a finite value.
As a result, we are unable to obtain a finite lower bound on the long-range order parameter in the Dyson hierarchical Ising spin glass model on the Nishimori line.
Determining whether long-range order exists in the region $3/2\le \alpha\le2$ therefore remains an important open problem.
Although we cannot prove this rigorously, the behavior of the ferromagnetic Ising model and the expected scenario for long-range spin glasses make it plausible that long-range order persists at least for $3/2\le\alpha<2$~\cite{KS2,EH,KAS,KY,Moore,MG,APT,LPTL}.

On the other hand, when our method is applied to the random-field Ising model on the Dyson hierarchical lattice, the existence of a phase transition for $1<\alpha<3/2$ can be established in an analogous manner~\cite{OO-Dyson}.
Interestingly, in this case it is considered that no phase transition occurs for $3/2<\alpha$~\cite{AW}, and our approach is consistent with this picture.

We remark that our method is restricted to Gaussian interactions on the Nishimori line.
Although the Nishimori line can be defined for more general distributions, such as binary ones, the present approach does not extend to these cases.
In particular, the interpolation method is incompatible with binary disorder; the Gibbs--Bogoliubov inequality on the Nishimori line fails in this setting~\cite{OO-GB}; and the reproduction property of the distribution, which is crucial for the proof of Theorem~\ref{theorem2}, does not hold.
Extending the present analysis beyond the Gaussian case therefore remains a challenging problem for future work.

We finally comment on critical exponents. 
For the ferromagnetic one-dimensional Ising model with long-range interactions, reflection positivity, together with the infrared bound and differential inequalities,
can be used to prove mean-field critical exponents in the range $1<\alpha<3/2$~\cite{AF}. 
For the present spin-glass model on the Nishimori line, however, reflection positivity is not available because the interactions are random and can take both signs. 
Therefore, neither our approach nor, to the best of our knowledge, the existing methods appear to yield rigorous information on critical exponents for the model considered in this paper. 
Critical exponents in spin-glass models with long-range interactions remain an important and active topic~\cite{JRP,FPT}, and we leave this problem for future work.

\section*{Data availability statement}
No new data were created or analyzed in this study.

\section*{Acknowledgment}
This work was supported by JST BOOST, Japan Grant Number JPMJBY24B6.
This work was also supported by JSPS KAKENHI Grant Nos. 24K16973 and 23H01432.
In addition, this work was supported by programs for bridging the gap between R\&D and IDeal society (Society 5.0) and Generating Economic and social value (BRIDGE)
and Cross-ministerial Strategic Innovation Promotion Program (SIP) from the Cabinet Office (No. 23836436).


\end{document}